\begin{document}

\pagestyle{plain}

\thispagestyle{fancy}

\newtheorem{remark}{Remark}
\newtheorem{proposition}{\textit{Proposition}}
\newtheorem{definition}{\textit{Definition}}
\newtheorem{problem}{\textit{Problem}}


\title{Delay-Distortion-Power Trade Offs in Quasi-Stationary
Source Transmission over Block Fading Channels}


\author{\IEEEauthorblockN{Roghayeh Joda$^{\star}$, Farshad~Lahouti$^{+}$ and Elza Erkip$^{*}$}\\
\IEEEauthorblockA{$\star$\small{Iran Telecommunication Research Center, Tehran, Iran, r.joda@itrc.ac.ir\\
$^+$School of Electrical and Computer Engineering, University of Tehran, Tehran, Iran, lahouti@ut.ac.ir\\
$^*$Dept. of Electrical and Computer Engineering, Polytechnic School
of Engineering, New York University, Brooklyn, NY, elza@nyu.edu}}
\thanks{Part of this work was done while Roghayeh Joda was visiting the Deptartment of Electrical and Computer Engineering, New York University, Polytechnic School
of Engineering.} }

\maketitle \thispagestyle{empty}

 \begin{abstract}
This paper investigates delay-distortion-power trade offs in
transmission of quasi-stationary sources over block fading channels
by studying encoder and decoder buffering techniques to smooth out
the source and channel variations. Four source and channel coding
schemes that consider buffer and power constraints are presented to
minimize the reconstructed source distortion. The first one is a
high performance scheme, which benefits from optimized source and
channel rate adaptation. In the second scheme, the channel coding
rate is fixed and optimized along with transmission power with
respect to channel and source variations; hence this scheme enjoys
simplicity of implementation. The two last schemes have fixed
transmission power with optimized adaptive or fixed channel coding
rate. For all the proposed schemes, closed form solutions for mean
distortion, optimized rate and power are provided and in the high
SNR regime, the mean distortion exponent and the asymptotic mean
power gains are derived. The proposed schemes with buffering exploit
the diversity due to source and channel variations. Specifically,
when the buffer size is limited, fixed channel rate adaptive power
scheme outperforms an adaptive rate fixed power scheme. Furthermore,
analytical and numerical results demonstrate that with limited
buffer size, the system performance in terms of reconstructed signal
SNR saturates as transmission power is increased, suggesting that
appropriate buffer size selection is important to achieve a desired
reconstruction quality.

\end{abstract}
\IEEEpeerreviewmaketitle
\section{Introduction}\label{SI}
Multimedia signals such as video exhibit quasi-stationary
characteristics, causing the compression rate to vary over time.
Wireless channels, on the other hand, also vary over time, making
wireless video transmission challenging. In order to maintain a
desired signal quality, multimedia communications over wireless
channels involve buffering at the encoder and decoder to smooth out
the source and channel variations at the cost of delay. For
delay-constrained communications the buffer size is kept limited,
and the transmitter controls the rate and/or transmission power to
minimize the end-to-end distortion, while preventing buffer overflow
and underflow. The goal of this paper is to study
delay-distortion-power trade offs in transmission of
quasi-stationary sources over block fading channels from the
perspective of source and channel code design and the associated
performance scaling laws.

There is a rich literature on source and channel coding for wireless
channels. The end-to-end mean distortion for transmission of a
stationary source over a block fading channel is considered in,
e.g., \cite{R11}\nocite{R15}\nocite{R18}-\cite{R27}, where the
performance is studied in terms of the (mean) distortion exponent or
the decay rate of the end-to-end mean distortion with (channel)
signal to noise ratio (SNR) in the high SNR regime. Delay-limited
communication of a stationary source over a wireless block fading
channel from the channel outage perspective is studied in
\cite{{R1}} and \cite{{R5}}. The transmission of a stationary source
over a MIMO block fading channel with constant power transmission is
considered in \cite{R21}, where the distortion outage probability
and the outage distortion exponent are considered as performance
measures. Several schemes for transmission of a quasi-stationary
source over a block fading channel are proposed in \cite{R22} to
minimize the distortion outage probability. The results demonstrate
the benefit of power adaption for delay-limited transmission of
quasi-stationary sources over wireless block fading channels from a
distortion outage perspective.

Considering delay-limited transmission of a quasi-stationary source
over a wireless block fading channel and noting the buffer
constraints, in this paper we propose a framework for rate and power
adaptation that uses source and channel codes achieving the
rate-distortion and the capacity in a given source and channel
state. Throughout, we assume that the channel state information is
known at the transmitter and the receiver. As described in Section
II, the end-to-end mean distortion, the mean distortion exponent and
asymptotic mean power gains are the performance metrics of interest.
Under average transmission power and buffer size constraints, four
designs are presented. The first scheme provides adaptation of
source and channel coding rates and the transmission power such that
the end-to-end mean distortion is minimized. The second scheme is a
channel optimized power adaptation strategy to minimize the mean
distortion for a given optimized fixed channel rate. This, for
example, could be useful when we are interested in simple
transmission schemes with a single channel (coding) rate. The other
two designs are constant power delay-limited communication schemes
with channel optimized adaptive or fixed rates.

The performance of the proposed schemes are evaluated and compared
both analytically and numerically. The scaling laws involving mean
distortion exponent and asymptotic mean power gain are derived in
the large SNR regime with limited or asymptotic buffer sizes. The
results demonstrate that the proposed schemes utilize the diversity
provided by increasing the number of source blocks in a frame
(buffer) at varying levels. Moreover, the presented schemes capture
a larger (source) diversity gain when the non-stationary
characteristic of the source is intensified or equivalently the
variations in the source characteristics from one block to another
increases. Another interesting observation is that with limited
buffer size and increasing transmission power, the system
performance in terms of reconstructed signal SNR saturates. In other
words, depending on the level of source variations, delay constraint
and the desired performance, the buffer size needs to be carefully
designed to ensure the performance scales properly with transmission
power. The results show that the proposed source and channel
optimized rate and power adaptive scheme outperforms other proposed
schemes in terms of the end-to-end mean distortion. For the case
that the buffer constraint is relatively small in comparison to the
power limit, it is seen that an adaptive power single channel rate
scheme outperforms a rate adaptive scheme with constant transmission
power. This is in contrast to the observation made in \cite{R23},
which is from the Shannon capacity perspective.

Note that the delay-limited transmission in \cite{R11}-\cite{R22}
refers to the scenario where each frame interval is short as it
spans only a limited number of channel blocks and the transmitter
cannot average out over channel fluctuations. Although we assume a
quasi stationary source as in \cite{R22}, we consider the end-to-end
mean distortion and the buffer size limitation, and as such the resulting
delay we investigate here is of a distinct nature and is primarily
affected by the variability of the source statistics.

The paper is organized as follows. Following the description of
system model in Section \ref{SII}, Section \ref{SIV} presents the
proposed scheme based on adaptive rate and power source and channel
coding design to minimize mean distortion. Next, in Section
\ref{SIII}, we present the scheme with adaptive power and fixed
channel coding rate optimized for minimized mean distortion. Two
constant power schemes are proposed in Section \ref{SV}. Performance
comparisons and evaluations are presented in Sections \ref{SVIB} and
\ref{SVI}. Finally the paper is concluded in Section \ref{SVII}.

\section{System Model}\label{SII}
We consider the transmission of a quasi-stationary source over a
block fading channel. The source is assumed to be independent
identically distributed circularly symmetric complex Gaussian with
zero mean and variance $\sigma^2_s$,
$s\in\mathcal{S}=\{1,2,...,N_s\}$ in a given source block of $N$
samples \cite{R2}. The source state $s\in\mathcal{S}$ is a discrete
random variable with the probability mass function (pmf) $p_s$. For
optimized source coding rate $R_s$ bits per source sample in state
$s$, the resulting distortion is $D=\sigma^2_s\,2^{-R_s}$ \cite{R3}.
One frame of the source is defined as $K$ source blocks. We assume
that the source sum rate in a frame is constrained to
$\tilde{B}_{\rm{max}}$ bits per frame due to buffer (delay)
limitations \cite{R25}\cite{R24}. Therefore, we have
$N{\sum_{j=1}^{K}R_{s_j}}\leq \tilde{B}_{\rm{max}}$, where ${s_j}$
is the source state in the source block $j$. Alternatively, we
obtain\vspace{-6pt}
\begin{align}\label{E47}
\frac{\sum_{j=1}^{K}R_{s_j}}{K}\leq B_{\rm{max}},
\end{align}
where
$B_{\rm{max}}\overset{\Delta}{=}\frac{\tilde{B}_{\rm{max}}}{KN}$,
and the LHS of \eqref{E47} indicates the average source coding rate
in bits per sample over a frame.

We consider a block fading channel for transmitting the source to
the destination. Let $X$, $Y$ and $Z$, respectively indicate channel
input, output and additive noise, where $Z$ is an i.i.d circularly
symmetric complex Gaussian noise, $Z\sim\mathcal{CN}(0,1)$.
Therefore, we have $Y=h X+Z$, where $h$ is the multiplicative
fading. The channel gain $\alpha=|h|^2$ is assumed to be constant
across one channel block and independently varies from one channel
block to another according to the continuous probability density
function $f(\alpha)$. For a Rayleigh fading channel, $\alpha$ is an
exponentially distributed random variable, where we assume
$E[\alpha]=1$. The instantaneous capacity of the fading Gaussian
channel over one channel block (in bits per channel use) is given by
$C(\alpha,\gamma)=\log_2(1+\alpha\gamma)$, where $\gamma$ is the
transmission power. We consider the long term power constraint
$\text{E}[\gamma]\leq \bar{P}$, where the expectation is taken over
the fading distribution \cite{R1}.

Each source frame is channel encoded to a single channel codeword
spanning one channel block. The bandwidth expansion ratio of the
system is denoted by $b$ channel uses per source sample, where $b\in
{\mathcal{R}}^+\textbackslash(0,1)$. Thus, the source rates in a
frame are assigned such that the following constraint is
satisfied\vspace{-18pt}
\begin{align}\label{E45}
\frac{\sum_{j=1}^{K}R_{s_j}}{K}=bR,
\end{align}
where $R$ in bits per channel use is the channel coding rate.
Equations \eqref{E47} and \eqref{E45} result in $bR\leq
B_{\rm{max}}.$
Assuming perfect knowledge at the transmitter and receiver of the
source and channel states in a frame, $\gamma$, $R_{s_j}$ and $R$
can be generally chosen as a function of source and channel states,
i.e., $\gamma=\gamma(\Sigma,\alpha)$,
$R_{s_j}=R_{s_j}(\Sigma,\alpha)$ and $R=R(\Sigma,\alpha)$, where
$\Sigma=[s_1,...,s_K]$ is the vector of source states in a given
frame. Thus, the end-to-end mean distortion in general is given
by\vspace{-12pt}
\begin{align}\label{E1}
\text{E}[D]=\text{E}_{\Sigma,\alpha}\left[\frac{\sum_{j=1}^{K}\sigma^2_{s_j}2^{-R_{s_j}}}{K}
I(R\leq C(\alpha,\gamma))+\frac{\sum_{j=1}^{K}\sigma^2_{s_j}}{K}
I(R>C(\alpha,\gamma))\right],
\end{align}
where $I$ is an indicator function which is $1$ when its argument is true and zero otherwise. Without loss of generality, we assume that
$\sigma^2_{s_1},...,\sigma^2_{s_K}$ are in a descending order. Note
that the problem and the proposed designs can be easily extended to
the case in which one source frame corresponds to more than one fading channel blocks.

We define the buffer constrained mean distortion exponent,
$\Delta^{B_{\rm{max}}}_{\textit{MD}}$, and the buffer unconstrained
mean distortion exponent, $\Delta_{\textit{MD}}$ as our performance
measures \cite{R11}, where\vspace{-8pt}
\begin{equation}\label{E83}
\Delta^{B_{\rm{max}}}_{\textit{MD}}=\underset{\bar{P}\rightarrow\infty}{\lim}{-\frac{\ln{\text{E}[D]}}{\ln{\bar{P}}}},\;\;\;\;\;\Delta_{\textit{MD}}=\underset{\bar{P}\rightarrow\infty,{B_{\rm{max}}}\rightarrow\infty}{\lim}{-\frac{\ln{\text{E}[D]}}{\ln{\bar{P}}}}.
\end{equation}

Let $\bar{P}_1$ and $\bar{P}_2$ be the average powers transmitted to
asymptotically achieve a specific buffer constrained or
unconstrained mean distortions by two different schemes. We define
the corresponding asymptotic mean power gain as
follows\vspace{-10pt}
\begin{equation}\label{E84}
G^{B_{\rm{max}}}_{M\!D}(\text{or }
G_{M\!D})=10\log_{10}{\bar{P}_2}-10\log_{10}{\bar{P}_1}.
\end{equation}\vspace{-6pt}
In the sequel, we also use the following mathematical definitions
and approximations (see (5.1.53) of \cite{R13})\vspace{-6pt}
\begin{equation}\label{E137}
E_p(x):={\int^\infty_1}{\frac{e^{-x\alpha}}{\alpha^p}\mathrm{d}\alpha};\;\;\;\;\;\;E_p(0)=\frac{1}{p-1},\;p>1;\;\;\;\;\;\;\Gamma\left(t,x\right):=\int_x^\infty{\alpha^{t-1}e^{-\alpha}\mathrm{d}\alpha},\;\;x,t\geq
0
\end{equation}\vspace{-10pt}
\begin{equation}\label{E98}
[x]^+:=\operatorname{\rm{max}}\{x,0\},\;\;\;\;\;\;E_1(x)\cong-E_c-\ln(x),
\;\;\;\;\;\;\;\;\;e^x\cong1+x \quad \text{if} \;x\rightarrow 0,
\end{equation}
where $E_c=0.5772156649$ is the Euler constant. 
In the following sections, we explore designs to minimize the
end-to-end mean distortion in the presence of average power and
buffer constraints. The optimization variables are power $\gamma$
and rate $R$ in each channel block, and source rates $R_{s_j}$ in
source block $j\in\{1,...,K\}$ over a frame, which may in
general be a function of channel and source states. Depending on
whether rate and/or power are adjusted or remain constant, we
present four schemes in the sequel.

\section{Source and Channel optimized Rate and Power Adaptation}\label{SIV}
In this section, we consider power and rate adaptation with regard
to source and channel states for improved performance of
communication of a quasi-stationary source over a block fading
channel. The objective is to devise power and rate adaptation
strategies for each block of the channel and the source such that
the end-to-end mean distortion is minimized, when the average power
and buffer are respectively constrained to $\bar{P}$ and
$B_{\rm{max}}$. The source and channel coding rates may be
controlled with respect to the channel state to avoid channel
outages such that $R\leq C(\alpha,\gamma)$. Thus, we have the
following design problem.
\begin{problem}\label{PR2}
The problem of delay-limited Source and Channel Optimized Rate and
Power Adaptation (SCORPA) for communication of a quasi-stationary
source with limited buffer over a block fading channel is formulated
as follows\vspace{-8pt}
\begin{align}\label{E2}
&\underset{\gamma(\Sigma,\alpha),{R}_{s_j}(\Sigma,\alpha)}{\operatorname{min}}\;{\text{E}_{\Sigma,\alpha}\left[\sum_{j=1}^{K}\sigma^2_{s_j}2^{-R_{s_j}}\right]}\;\text{
subject to }
\end{align}{\vspace{-20pt}}
\begin{align}
&
\text{E}[\gamma]\leq\bar{P},\;\;\;\;\;\;{\sum_{j=1}^{K}R_{s_j}}\leq
K B_{\rm{max}}
\end{align}{\vspace{-20pt}}
\begin{align}
&{\sum_{j=1}^{K}R_{s_j}}\leq {K} b C(\alpha,\gamma)\label{NewCons}.
\end{align}
\end{problem}

\begin{proposition}\label{P1}
Solution to Problem \ref{PR2}, denoted by $R^*_{s_j}$ and
$\gamma^*$, for a given frame and an arbitrary block fading channel
is given by\vspace{-6pt}
\begin{align}\label{E25}
&R^*_{s_j}=\begin{cases}0&\text{if}\;\alpha\leq\frac{\lambda}{bK\sigma_{s_1}^2}\\
\!\left[\log_2\frac{\sigma^2_{s_j}}{\lambda_2}\right]^+\!&\text{if}\;d_{1,
m}\leq\alpha< d_{2,m},\;\alpha< c_m:\!\forall{m}\in\{1,...,K\}\\
\!\left[\log_2\frac{\sigma^2_{s_j}}{\mathbf{\tilde{\lambda}}_2}\right]^+\!&\text{if}\;\!d_{1,m}\leq\alpha<
d_{2,m},\;\alpha\geq c_m:\forall{m}\in\{1,...,K\}
\end{cases}
\end{align}\vspace{-12pt}
\begin{align}\label{E26}
&\gamma^*=\begin{cases}0&\text{if}\;\alpha\leq\frac{\lambda}{bK\sigma_{s_1}^2}\\
\frac{\frac{\alpha\lambda_2bK}{\lambda}-1}{\alpha}&\text{if}\;d_{1,m}\leq\alpha< d_{2,m},\;\alpha<c_m:\forall{m}\in\{1,...,K\}\\
\frac{2^{\frac{B_{\rm{max}}}{b}}-1}{\alpha}&\text{if}\;d_{1,m}\leq\alpha<
d_{2,m},\;\alpha\geq c_m:\forall{m}\in\{1,...,K\},
\end{cases}
\end{align}
where\vspace{-10pt}
\begin{align}\label{E52}
&c_m=2^{\frac{m+bK}{m}\frac{B_{\rm{max}}}{b}}\frac{\lambda}{bK\sqrt[m]{\sigma^2_{s_1}...\sigma^2_{s_m}}}
\end{align}
\begin{align}\label{E53}
&d_{1,m}=\frac{\lambda\sqrt[bK]{\sigma^2_{s_1}...\sigma^2_{s_m}}}{bK\sqrt[bK]{\sigma_{s_m}^{2(m+bK)}}}\;\;\;d_{2,m}=\begin{cases}\frac{\lambda\sqrt[bK]{\sigma^2_{s_1}...\sigma^2_{s_m}}}{bK\sqrt[bK]{\sigma_{s_{m+1}}^{2(m+bK)}}}&\text{if}\;m\neq K\\
\infty&\text{if}\;m=K
\end{cases}
\end{align}
\begin{align}\label{E27}
\lambda_2=\sqrt[m+bK]{\left(\sigma^2_{s_1}...\sigma^2_{s_m}\right)}\sqrt[m+bK]{(\frac{\lambda}{\alpha
bK})^{bK}}, \;\;\;\;\mathbf{\tilde{\lambda}}_2=
\sqrt[n]{\frac{\sigma_{s_1}^2...\sigma^2_{s_n}}{2^{KB_{\rm{max}}}}}
\end{align}
$n$ is an integer in $\{1,...,K\}$ such that $ \sigma_{s_{n+1}}^2<
\mathbf{\tilde{\lambda}}_2 \leq \sigma_{s_{n}}^2$. The parameter
$\lambda$ is selected such that\vspace{-4pt}
\begin{align}\label{E48}
\text{E}_{\Sigma}\!\left[\!\sum_{m=1}^{K}\!\int_{a_{1,m}}^{a_{2,m}}{\!\frac{2^{\frac{B_{\rm{max}}}{b}}-1}{\alpha}\!}f(\alpha)d\alpha\!\!+\!\!\!\int_{e_{1,m}}^{e_{2,m}}
\!\!\left\{\!\sqrt[m+bK]{\sigma^2_{s_1}...\sigma^2_{s_m}}\!\!\sqrt[m+bK]{\left(\!\frac{bK}{\lambda}\!\right)^m}\!\alpha^{\frac{-bK}{m+bK}}\!-\!\alpha^{-1}\!\right\}f(\alpha)d\alpha\!\right]\!\!=\!\!\bar{P},
\end{align}\vspace{-12pt}
where\vspace{-4pt}
\begin{align}\label{E100}
a_{1,m}\!=\!\max\{d_{1,m},c_m\},\;\;a_{2,m}\!=\!\max\{d_{2,m},c_{m}\},\;\;e_{1,m}\!=\!d_{1,m},\;\;e_{2,m}\!=\!\max\{\!\min\{c_m,d_{2,m}\},e_{1,m}\!\}.
\end{align}
Specifically for Rayleigh block fading channel, \eqref{E48} is
simplified to\vspace{-8pt}
\begin{align}\label{E49}
&\text{E}_{\Sigma}\!\!\left[\!\sum_{m=1}^{K}(2^{\frac{B_{\rm{max}}}{b}}\!\!-\!\!1)(E_1(a_{1,m})\!\!-\!\!E_1(a_{2,m}))\!\!+\!\!
\sqrt[m+bK]{\sigma^2_{s_1}...\sigma^2_{s_m}}\!\!\sqrt[m+bK]{\!\!\left(\frac{bK}{\lambda}\!\right)^{m}}\left(\!\Gamma\!\left(\frac{-bK}{m+bK}+1,e_{1,m}\!\right)\!\!-\!\!\right.\right.\nonumber\\
&\left.\left.\Gamma\left(\!\frac{-bK}{m+bK}+1,e_{2,m}\!\right)\!\right)\!+\!E_1(e_{2,m})\!-\!E_1(e_{1,m})\!\right]\!\!=\!\!\bar{P}
\end{align}
\end{proposition}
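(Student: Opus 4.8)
The plan is to exploit the fact that the only constraint coupling distinct realizations of $(\Sigma,\alpha)$ is the average power constraint $\text{E}[\gamma]\le\bar{P}$; the buffer and no-outage constraints in \eqref{NewCons} hold per frame. So I would dualize only the power constraint with a multiplier $\lambda\ge 0$ and minimize the per-realization Lagrangian $\sum_{j=1}^{K}\sigma^2_{s_j}2^{-R_{s_j}}+\lambda\gamma$ subject to $\sum_j R_{s_j}\le KB_{\rm{max}}$, $\sum_j R_{s_j}\le KbC(\alpha,\gamma)$, and $R_{s_j},\gamma\ge 0$. The multiplier $\lambda$ is then fixed by forcing $\text{E}[\gamma]=\bar{P}$, which is precisely condition \eqref{E48}. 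Since the objective is convex, the buffer constraint is linear, and $\sum_j R_{s_j}-KbC(\alpha,\gamma)\le 0$ is convex (the capacity is concave in $\gamma$), the feasible set is convex and the KKT conditions are necessary and sufficient.

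First I would solve the inner rate-allocation subproblem for a fixed total rate $R_{\rm{tot}}=\sum_j R_{s_j}$: minimizing $\sum_j\sigma^2_{s_j}2^{-R_{s_j}}$ over $R_{s_j}\ge 0$ is the classical reverse water-filling for parallel Gaussian sources, giving $R_{s_j}=[\log_2(\sigma^2_{s_j}/\nu)]^+$ for a water level $\nu$. Because the variances are sorted in descending order, the active set is $\{1,\dots,m\}$ with $m$ determined by $\sigma^2_{s_{m+1}}<\nu\le\sigma^2_{s_m}$, and one finds $\nu=(\sigma^2_{s_1}\cdots\sigma^2_{s_m})^{1/m}2^{-R_{\rm{tot}}/m}$. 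The resulting minimal distortion $D(R_{\rm{tot}})$ is convex and decreasing, with $D'(R_{\rm{tot}})=-(\ln 2)\nu$.

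Next I would carry out the joint optimization over $(R_{\rm{tot}},\gamma)$. Since $D$ decreases in $R_{\rm{tot}}$, the active upper bound is tight at the optimum, so the cheapest power supporting $R_{\rm{tot}}$ is $\gamma=(2^{R_{\rm{tot}}/(Kb)}-1)/\alpha$; substituting reduces everything to a one-dimensional convex minimization of $D(R_{\rm{tot}})+\lambda(2^{R_{\rm{tot}}/(Kb)}-1)/\alpha$ on $[0,KB_{\rm{max}}]$. The stationarity condition $D'(R_{\rm{tot}})+\frac{\lambda\ln 2}{Kb\alpha}2^{R_{\rm{tot}}/(Kb)}=0$ becomes $\nu=\frac{\lambda}{\alpha Kb}2^{R_{\rm{tot}}/(Kb)}$; combining this with $\nu=(\sigma^2_{s_1}\cdots\sigma^2_{s_m})^{1/m}2^{-R_{\rm{tot}}/m}$ determines the unconstrained stationary rate and, after simplification, the water level $\lambda_2$ of \eqref{E27} together with the power in \eqref{E26}. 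Three regimes follow from where this stationary rate lands: if it is nonpositive, equivalently $\alpha\le\lambda/(bK\sigma^2_{s_1})$, the condition obtained by evaluating the marginal cost at $R_{\rm{tot}}=0$, then $R^*_{s_j}=\gamma^*=0$; if it lies in $(0,KB_{\rm{max}})$ the channel constraint binds, giving the $\lambda_2$ branch; if it would exceed $KB_{\rm{max}}$ the buffer constraint binds, forcing $R_{\rm{tot}}=KB_{\rm{max}}$, $\gamma^*=(2^{B_{\rm{max}}/b}-1)/\alpha$, and the water level $\tilde{\lambda}_2$ of \eqref{E27}.

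The remaining work, and the main obstacle, is the threshold bookkeeping in assembling \eqref{E25}--\eqref{E26}. As $\alpha$ grows the water level falls, activating source blocks one by one; setting $\lambda_2=\sigma^2_{s_m}$ and $\lambda_2=\sigma^2_{s_{m+1}}$ produces the activation boundaries $d_{1,m},d_{2,m}$ of \eqref{E53}, while setting the stationary rate equal to $KB_{\rm{max}}$ produces the buffer boundary $c_m$ of \eqref{E52}. The delicate point is that the buffer threshold $c_m$ and the activation thresholds $d_{1,m},d_{2,m}$ can interleave in different orders depending on $\Sigma$, so one must verify that the resulting $\alpha$-intervals tile $(0,\infty)$ consistently; this is exactly what forces the nested $\min$/$\max$ limits $a_{1,m},a_{2,m},e_{1,m},e_{2,m}$ in \eqref{E100}. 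Finally I would compute $\text{E}[\gamma]$ by integrating the optimal power of \eqref{E26} against $f(\alpha)$ over each regime and set it equal to $\bar{P}$, obtaining \eqref{E48}; specializing to the Rayleigh density and evaluating the integrals in terms of $E_1(\cdot)$ and $\Gamma(\cdot,\cdot)$ from \eqref{E137} yields the closed form \eqref{E49}.
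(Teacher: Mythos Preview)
Your proposal is correct and follows essentially the same convex/KKT route as the paper, arriving at the same stationarity conditions, the same three regimes (no transmission, channel-capacity-limited, buffer-limited), and the same threshold bookkeeping for $d_{1,m},d_{2,m},c_m$.

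The only organizational difference is that the paper dualizes all the constraints at once, introducing separate multipliers $\lambda,\lambda_0,\lambda_1$ for the power, buffer, and rate--capacity constraints, then differentiates the full Lagrangian and case-splits on whether the buffer constraint is active. You instead dualize only the coupling (power) constraint and solve the per-realization problem by a two-level decomposition: reverse water-filling for the inner rate allocation, followed by a one-dimensional convex minimization in the total rate after substituting the minimal supporting power $\gamma=(2^{R_{\rm tot}/(Kb)}-1)/\alpha$. Your route is a bit cleaner in that it avoids juggling several multipliers simultaneously and makes the monotonicity argument (the channel/buffer upper bound is tight at the optimum) explicit, but the substance and all the computations match the paper's Appendix proof.
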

\begin{remark}
According to Proposition \ref{P1}, if
$\alpha\leq\frac{\lambda}{bK\sigma_1^2}$ (case 1 in \eqref{E25} and
\eqref{E26}), channel transmission power and all the source rates in
a given frame are set to zero. This corresponds to the scenario in
which the channel gain and/or the variances of the source blocks
within a frame are relatively small. Now suppose that
$d_{1,m}\leq\alpha< d_{2,m}$, where $m\in\{1,...,K\}$ and $d_{1,m}$
and $d_{2,m}$ are functions of the source variances in states
$\Sigma$. For $\alpha< c_m$ (case 2 in \eqref{E25} and \eqref{E26}),
the available channel rate, which equals the channel capacity, is
allocated to the $m$ source blocks with highest variances within the
$K$ source blocks of the frame. For $\alpha\geq c_m$ (case 3 in
\eqref{E25} and \eqref{E26}), due to the buffer size constraint, the
channel coding rate is set to $B_{\rm{max}}/b$ and is allocated to
the $n$ source blocks with highest variances within the $K$ source
blocks of the frame, where $n$ is given in Proposition \ref{P1}.
Clearly, in case 2 the instantaneous channel capacity of the frame
constrains the rate allocation to source blocks within the frame. In
case 3, the buffer size constrains the rate allocation instead.
\end{remark}
\begin{proof}
The proof is provided in Appendix \ref{A2}.
\end{proof}

The next two propositions quantify the performance of the proposed
SCORPA scheme in terms of the mean distortion and mean distortion
exponent, respectively.
\begin{proposition}\label{P4}
Mean distortion obtained by SCORPA scheme for transmission of a
quasi-stationary source over a Rayleigh block fading channel is
given by\vspace{-2pt}
\begin{align}\label{E43}
&\text{E}[D]\!\!=\!\!\frac{1}{K}\text{E}_{\Sigma}\left[\sum_{j=1}^{K}\sigma^2_{s_j}\left(1\!-\!\exp(-{\frac{\lambda}{bK\sigma^2_{s_1}}})\right)\!\!+\!\!
\sum_{m=1}^{K}\!\left(\!n\mathbf{\tilde{\lambda}}_2+\sum_{j=n+1}^{K}{\sigma^2_{s_j}}\!\right)\!\!\bigl(\exp(-a_{1,m})\!\!-\!\!\exp(-a_{2,m})\bigr)\!+\!\!\right.\nonumber\\
&\left.m\sqrt[m+bK]{\sigma^2_{s_1}...\sigma^2_{s_m}}\sqrt[m+bK]{\left(\frac{\lambda}{bK}\right)^{Kb}}\left(\Gamma(\frac{-bK}{m+bK}+1,e_{1,m})-\Gamma(\!\frac{-bK}{m+bK}+1,e_{2,m})\!\right)\!+\!\right.\nonumber\\
&\left.\sum_{j=m+1}^{K}\!\sigma^2_{s_j}\left(\exp(-e_{1,m})\!-\!\exp(-e_{2,m})\right)\right],
\end{align}
where $\lambda$, $\mathbf{\tilde{\lambda}}_2$, $n$, $a_{1,m}$,
$a_{2,m}$, $e_{1,m}$ and $e_{2,m}$ are defined in Proposition
\ref{P1}.
\end{proposition}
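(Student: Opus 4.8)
The plan is to start from the end-to-end mean distortion \eqref{E1} and substitute the optimal allocation of Proposition \ref{P1}. First I would observe that the constraint \eqref{NewCons} of Problem \ref{PR2}, together with \eqref{E45}, forces $R\le C(\alpha,\gamma)$ for every realization under $\gamma^*$ and $R^*_{s_j}$. Hence the outage indicator $I(R>C(\alpha,\gamma))$ in \eqref{E1} is identically zero at the optimum, and the mean distortion collapses to $\text{E}[D]=\frac{1}{K}\text{E}_{\Sigma,\alpha}\bigl[\sum_{j=1}^{K}\sigma^2_{s_j}2^{-R^*_{s_j}}\bigr]$, i.e. exactly the objective of \eqref{E2} evaluated at the solution.

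Next I would condition on the source-state vector $\Sigma$ and compute the per-block contribution $\sigma^2_{s_j}2^{-R^*_{s_j}}$ in each of the three regimes of \eqref{E25}. In case~1 all $R^*_{s_j}=0$, so each block contributes $\sigma^2_{s_j}$, giving a per-frame distortion $\sum_{j=1}^{K}\sigma^2_{s_j}$. In case~2 the $m$ largest-variance blocks have $R^*_{s_j}=\log_2(\sigma^2_{s_j}/\lambda_2)$ and therefore each contribute exactly $\lambda_2$ (total $m\lambda_2$), while the remaining $K-m$ zero-rate blocks contribute $\sum_{j=m+1}^{K}\sigma^2_{s_j}$. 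In case~3 the identical bookkeeping with $\tilde{\lambda}_2$ and threshold index $n$ yields a per-frame distortion $n\tilde{\lambda}_2+\sum_{j=n+1}^{K}\sigma^2_{s_j}$. This produces the four groups of terms that appear in \eqref{E43}.

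The remaining step is to integrate these per-frame distortions against the Rayleigh density $f(\alpha)=e^{-\alpha}$ over the $\alpha$-intervals dictated by Proposition \ref{P1}. For case~1, $\int_0^{\lambda/(bK\sigma^2_{s_1})}e^{-\alpha}d\alpha=1-\exp(-\lambda/(bK\sigma^2_{s_1}))$. For every constant-distortion piece (the zero-rate blocks in cases~2 and~3 and the $n\tilde{\lambda}_2$ term), integrating $e^{-\alpha}$ over $[a_{1,m},a_{2,m}]$ and $[e_{1,m},e_{2,m}]$ returns the differences $\exp(-a_{1,m})-\exp(-a_{2,m})$ and $\exp(-e_{1,m})-\exp(-e_{2,m})$. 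The only non-elementary integral is the $m\lambda_2$ term: since $\lambda_2\propto\alpha^{-bK/(m+bK)}$ by \eqref{E27}, the integrand is $\alpha^{-bK/(m+bK)}e^{-\alpha}$, and setting $t-1=-bK/(m+bK)$ in the definition of $\Gamma(t,x)$ in \eqref{E137} converts the integral over $[e_{1,m},e_{2,m}]$ into $\Gamma(-bK/(m+bK)+1,e_{1,m})-\Gamma(-bK/(m+bK)+1,e_{2,m})$, which is precisely the third group in \eqref{E43}. Taking $\text{E}_\Sigma$ of the sum and restoring the $1/K$ prefactor gives \eqref{E43}.

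I expect the main obstacle to be bookkeeping rather than analysis: one must verify that the endpoints $a_{1,m},a_{2,m},e_{1,m},e_{2,m}$ of \eqref{E100}, assembled from $\max$/$\min$ of $c_m,d_{1,m},d_{2,m}$, partition the $\alpha$-axis so that across all $m\in\{1,\dots,K\}$ each realization of $\alpha$ falls into exactly one case with the correct active-block count. In particular one should check that the ``$\alpha<c_m$'' and ``$\alpha\ge c_m$'' sub-splits in \eqref{E25}--\eqref{E26} are consistent with the $\min\{c_m,d_{2,m}\}$ truncation inside $e_{2,m}$ and the $\max\{d_{1,m},c_m\}$ lower limit inside $a_{1,m}$, so that no sub-interval is double counted or omitted and degenerate (empty) intervals collapse to zero length. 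Once this partition is confirmed, the identification of each contribution with an exponential difference or an incomplete-gamma difference is immediate.
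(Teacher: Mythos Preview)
Your proposal is correct and follows essentially the same route as the paper, which simply states that \eqref{E43} is obtained ``straightforwardly'' by substituting the solution of Proposition~\ref{P1} into \eqref{E1} and using the exponential channel-gain density together with the special-function definitions in \eqref{E137}. You have accurately filled in the details the paper omits---the collapse of the outage term, the case-by-case per-frame distortions $\sum_j\sigma^2_{s_j}$, $m\lambda_2+\sum_{j>m}\sigma^2_{s_j}$, and $n\tilde\lambda_2+\sum_{j>n}\sigma^2_{s_j}$, and the identification of the $\alpha^{-bK/(m+bK)}e^{-\alpha}$ integral with an incomplete gamma---and your caution about verifying that the endpoints in \eqref{E100} partition the $\alpha$-axis without overlap or gaps is the only real bookkeeping step, exactly as you note.
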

\begin{proof}
Considering exponentially distributed channel gain, using
Proposition \ref{P1}, \eqref{E1}-\eqref{E137}, achieving \eqref{E43}
is straightforward.
\end{proof} 
\begin{proposition}\label{P8}
For a Rayleigh block fading channel and with large power $\bar{P}$,
SCORPA scheme achieves the mean distortion exponents
$\Delta^{B_{\rm{max}}}_{\textit{MD}}=0$ and
$\Delta_{\textit{MD}}=b$, respectively.
\end{proposition}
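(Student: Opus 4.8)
The plan is to obtain both exponents directly from the closed-form mean distortion \eqref{E43} of Proposition \ref{P4}, after pinning down how the multiplier $\lambda$ of Proposition \ref{P1} scales with $\bar{P}$ through the power constraint \eqref{E49}. The first step, common to both regimes, is to observe that the expected transmit power on the left-hand side of \eqref{E49} is decreasing in $\lambda$ and diverges as $\lambda\to 0$: for fixed $B_{\rm{max}}$ the $m=K$ term $(2^{B_{\rm{max}}/b}-1)E_1(a_{1,K})$ blows up because $a_{1,K}\propto\lambda\to 0$ and $E_1(x)\cong-E_c-\ln x$ by \eqref{E98}, while for $B_{\rm{max}}=\infty$ the $m=K$ capacity-limited (incomplete-gamma) term carrying the factor $(bK/\lambda)^{1/(1+b)}$ diverges polynomially in $1/\lambda$. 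In either case, matching an unbounded $\bar{P}$ forces $\lambda\to 0$, and the thresholds $c_m,d_{1,m},d_{2,m}$ of \eqref{E52}--\eqref{E53} and the water levels $\lambda_2,\mathbf{\tilde{\lambda}}_2$ of \eqref{E27} can then be tracked as $\lambda\to 0$ and inserted term by term into \eqref{E43}.

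For $\Delta^{B_{\rm{max}}}_{\textit{MD}}$ I keep $B_{\rm{max}}$ fixed. Since $c_m,d_{1,m}\propto\lambda\to 0$ (and $d_{2,m}\propto\lambda$ for $m<K$, whereas $d_{2,K}=\infty$), the capacity-limited case $\alpha<c_m$ and the no-transmission case together occupy an $\alpha$-set of probability $O(\lambda)\to 0$, so their contributions to \eqref{E43} vanish. What survives is the buffer-limited ($m=K$) piece, for which $a_{1,K}\to 0$ and $a_{2,K}=\infty$ give $\exp(-a_{1,K})-\exp(-a_{2,K})\to 1$; hence $\text{E}[D]$ converges to the strictly positive, $\bar{P}$-independent floor $\tfrac{1}{K}\text{E}_{\Sigma}\!\bigl[n\mathbf{\tilde{\lambda}}_2+\sum_{j=n+1}^{K}\sigma^2_{s_j}\bigr]$ fixed by the buffer budget through $\mathbf{\tilde{\lambda}}_2$ in \eqref{E27}. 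Because $-\ln\text{E}[D]$ stays bounded while $\ln\bar{P}\to\infty$, the definition \eqref{E83} yields $\Delta^{B_{\rm{max}}}_{\textit{MD}}=0$.

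For $\Delta_{\textit{MD}}$ I let $B_{\rm{max}}\to\infty$, which sends $c_m\to\infty$ and removes the buffer-limited case, leaving only the no-transmission and capacity-limited contributions in \eqref{E49} and \eqref{E43}. The scaling of $\lambda$ is then set by the dominant $m=K$ term of the power integral: with $d_{1,K}\propto\lambda$ and $d_{2,K}=\infty$, the integrand $\propto(bK/\lambda)^{1/(1+b)}\alpha^{-b/(1+b)}$ integrates to a finite multiple of $\Gamma\!\bigl(\tfrac{1}{1+b}\bigr)$ (finite since $\tfrac{1}{1+b}>0$), so $\bar{P}\propto\lambda^{-1/(1+b)}$, i.e.\ $\lambda\propto\bar{P}^{-(1+b)}$; the $-\alpha^{-1}$ piece only adds $E_1(d_{1,K})\sim-\ln\lambda$, which is lower order. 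Substituting into \eqref{E43}, the dominant term is again the $m=K$ capacity-limited piece, proportional to $(\lambda/bK)^{b/(1+b)}\Gamma\!\bigl(\tfrac{1}{1+b}\bigr)\propto\lambda^{b/(1+b)}\propto\bar{P}^{-b}$, while every competing term carries a strictly larger power of $\lambda$ (the $m<K$ pieces scale as $\lambda^{bK/(m+bK)}$, and the no-transmission term and the inactive-block terms scale as $\lambda$), hence decays like $\bar{P}^{-(1+b)}$ or faster. Therefore $\text{E}[D]$ scales as $\bar{P}^{-b}$ and \eqref{E83} gives $\Delta_{\textit{MD}}=b$.

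The hard part will be the second regime: one must correctly single out $m=K$ as the index that simultaneously dominates the power constraint \eqref{E49} and the distortion \eqref{E43}, and verify that all competing terms indeed carry strictly larger exponents of $\lambda$ (equivalently, strictly faster decay in $\bar{P}$). This requires controlling the $\alpha$-integrals uniformly as $\lambda\to 0$, in particular checking $\Gamma\!\bigl(\tfrac{1}{1+b},d_{1,K}\bigr)\to\Gamma\!\bigl(\tfrac{1}{1+b}\bigr)<\infty$ and confirming that the logarithmic $E_1$ corrections are genuinely of lower order; one should also note that the double limit in \eqref{E83} is taken with $B_{\rm{max}}\to\infty$ fast enough that the buffer constraint stays inactive, so that the order of limits is immaterial. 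The first regime, by contrast, needs nothing beyond the $\lambda\to 0$ claim, since a fixed buffer simply imposes a hard, power-independent distortion floor.
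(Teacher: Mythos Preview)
Your proposal is correct and follows essentially the same route as the paper: in both regimes you let $\lambda\to 0$ via the power constraint \eqref{E49}, track the thresholds $c_m,d_{i,m},a_{i,m},e_{i,m}$ as $\lambda\to 0$, and then read off the surviving term of \eqref{E43}---the buffer-limited floor when $B_{\rm{max}}$ is fixed, and the $m=K$ capacity-limited piece (scaling as $\lambda^{b/(1+b)}\propto\bar{P}^{-b}$) when $B_{\rm{max}}\to\infty$. One small slip: for $m<K$ the incomplete-gamma difference itself contributes an extra factor $\lambda^{m/(m+bK)}$, so those pieces actually scale like $\lambda$ rather than $\lambda^{bK/(m+bK)}$, but this is still $o(\lambda^{b/(1+b)})$ and the conclusion is unaffected.
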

\begin{proof}
We first consider the buffer constrained scenario. As evident from
\eqref{E48}, we have $\lambda\rightarrow 0$ for large power
constraint $\bar{P}\rightarrow \infty$. From \eqref{E100}, we
obtain\vspace{-12pt}
\begin{align}\label{E37}
e_{i,m}\rightarrow 0\;\;\;\forall{m}\in
\{1,...,K-1\}\;\;\;i\in\{1,2\},\;\;\;\;\;\;\;
e_{1,K}\rightarrow 0,\;\;\;\;\;\;\;e_{2,K}\rightarrow \infty.
\end{align}
Thus, noting \eqref{E98}, we have\vspace{-6pt}
\begin{align}\label{E60}
&E_1(e_{1,m})-E_1(e_{2,m})=-\ln(\frac{e_{1,m}}{e_{2,m}}),\;\;\;\forall{m}\in \{1,...,K-1\}\\
&\Gamma\left(\frac{-bK}{m+bK}+1,e_{1,m}\right)-\Gamma\left(\frac{-bK}{m+bK}+1,e_{1,m}\right)=0,\;\;\;\forall{m}\in
\{1,...,K-1\}\\
&E_1(e_{1,K})-E_1(e_{2,K})=-\ln(e_{1,K})\\
&\Gamma\left(\frac{-bK}{m+bK}+1,e_{1,K}\right)-\Gamma\left(\frac{-bK}{m+bK}+1,e_{2,K}\right)=\Gamma\left(\frac{-bK}{m+bK}+1,0\right)\label{E61}.
\end{align}
From \eqref{E100} it is observed that the equations similar to
\eqref{E37} to \eqref{E61} are satisfied for $a_{1,m}$ and
$a_{2,m}$. Hence, from \eqref{E49} and by ignoring
$\ln(\frac{bK}{\lambda})$ in contrast to $\frac{bK}{\lambda}$ for
$\lambda\rightarrow0$, we obtain\vspace{-2pt}
\begin{align}\label{E36}
\text{E}_{\Sigma}\left[
\sqrt[K+bK]{\sigma^2_{s_1}...\sigma^2_{s_K}}\sqrt[K+bK]{\left(\frac{bK}{\lambda}\right)^{K}}\Gamma\left(\frac{-bK}{K+bK}+1,0\right)
\right]=\bar{P}.
\end{align}
Or equivalently\vspace{-4pt}
$\sqrt[1+b]{\left(\frac{bK}{\lambda}\right)}\Gamma\left(\frac{1}{b+1},0\right)\text{E}_{\Sigma}\left[
\sqrt[K+bK]{\sigma^2_{s_1}...\sigma^2_{s_K}} \right]=\bar{P}.$
Thus,
\begin{align}\label{E50}
\sqrt[1+b]{\lambda}=\frac{\sqrt[1+b]{\left(bK\right)}\Gamma\left(\frac{1}{b+1},0\right)\text{E}_{\Sigma}\left[
\sqrt[K+bK]{\sigma^2_{s_1}...\sigma^2_{s_K}} \right]}{\bar{P}}.
\end{align}
From \eqref{E37} to \eqref{E61} and noting $\lambda\rightarrow 0$,
only the second term in \eqref{E43} is non-negligible and therefore
mean distortion is given by\vspace{-10pt}
\begin{align}\label{E38}
&\text{E}[D]=\frac{1}{K}\text{E}_{\Sigma}\left[n\mathbf{\tilde{\lambda}}_2+\sum_{j=n+1}^{K}{\sigma^2_{s_j}}\right]=
\frac{1}{K}\text{E}_{\Sigma}\left[n\sqrt[n]{\frac{\sigma_{s_1}^2...\sigma^2_{s_n}}{2^{KB_{\rm{max}}}}}+\sum_{j=n+1}^{K}{\sigma^2_{s_j}}\right],
\end{align}
where $n$ is an integer in $\{1,2,...,K\}$ such that
$\sigma^2_{s_{n+1}}<\sqrt[n]{\frac{\sigma_{s_1}^2...\sigma^2_{s_n}}{2^{KB_{\rm{max}}}}}\leq\sigma^2_{s_{n}}$.
Therefore for any finite buffer constraint $B_{max}$,
$\Delta_{\textit{MD}}^{B_{\rm{max}}}=0.$

We now consider the buffer unconstrained scenario. In this case we
first let $B_{max} \rightarrow \infty$ in Propositions 1 and 2 and
then increase the power $\bar{P}$. For $B_{\rm{max}}\rightarrow
\infty$ and noting \eqref{E52}, we have $c_m\rightarrow\infty$.
Thus, from \eqref{E100}, we obtain
$a_{1,m}=a_{2,m}=c_m\rightarrow\infty$. Hence, the first term in
\eqref{E48} is omitted and for large power constraint
$\lambda\rightarrow 0$. With $\lambda\rightarrow 0$, we still obtain
\eqref{E37} to \eqref{E50}. The second term in \eqref{E43} is
omitted and therefore, the mean distortion in \eqref{E43} is
simplified as
\begin{align}\vspace{-8pt}
&\text{E}[D]\!\!=\!\!\frac{1}{K}\text{E}_{\Sigma}\!\!\left[\!\!\sum_{j=1}^{K}\sigma^2_{s_i}\frac{\lambda}{bK\sigma^2_{s_1}}\!\!+\!\!
K\sqrt[K+bK]{\sigma^2_{s_1}...\sigma^2_{s_K}}\sqrt[1+b]{\!\left(\frac{\lambda}{bK}\!\right)^{b}}\Gamma\!\left(\!\frac{1}{b+1},\!\right)\!\!+\!\!
\sum_{m=1}^{K}\sum_{j=m+1}^{K}\!\sigma^2_{s_i}\!\left(d_{2,m}\!-\!d_{1,m}\right)\!\!\right]\!,\nonumber
\end{align}
where noting $d_{1,m}$ and $d_{2,m}$ in \eqref{E53}, ignoring
$\frac{\lambda}{bK}$ with respect to
$\left(\frac{\lambda}{bK}\right)^{\frac{b}{b+1}}$ for
$\lambda\rightarrow 0$ and utilizing \eqref{E50} we
have\vspace{-12pt}
\begin{align}\label{E55}
\text{E}[D]=\frac{1}{\bar{P}^b}\left(\text{E}_{\Sigma}\left[
\sqrt[K+bK]{\sigma^2_{s_1}...\sigma^2_{s_K}}\right]\right)^{b+1}\Gamma\left(\frac{1}{b+1},0\right)^{b+1}.
\end{align}
Therefore for any finite buffer constraint $B_{max}$,
$\Delta_{\textit{MD}}=b$.
Thus, the proof is complete.
\end{proof}
The performance of the proposed SCORPA scheme is studied in Section
\ref{SVI}.

\section{Channel Optimized Power Adaptation with Constant
Channel Coding Rate} \label{SIII}
 In this section, the aim is to
find the optimized power allocation strategy and non-adaptive
channel code rate $R$ such that the mean distortion for
communication of a delay-limited quasi-stationary source over a
block fading channel minimized. With a fixed channel rate, the
channel code can be fixed, which simplifies the design and
implementation of transceivers. Note that source coding rates in
different blocks of a frame, $R_{s_j}$, may still be adapted. Also,
we will find the best fixed channel coding rate $R$ to minimize the
expected source distortion. The mean distortion in \eqref{E1} for a
fixed channel rate $R$ is simplified as follows\vspace{-2pt}
\begin{align}\label{E22}
\text{E}[D]=\text{E}_{\Sigma}\left[\frac{\sum_{j=1}^{K}\sigma^2_{s_j}2^{-R_{s_j}}}{K}\right]\bigl(1-\text{Pr}\bigl(R>
C\left(\alpha,\gamma\right)\bigr)\bigr)+\text{E}_s[\sigma_s^2]
\text{Pr}\bigl(R> C\left(\alpha,\gamma\right)\bigr),
\end{align}
which is subject to the constraints in \eqref{E47} and \eqref{E45}.
We have the following design problem.
\begin{problem}\label{PR4}
The problem of delay-limited Channel Optimized Power Adaptation with
Constant Channel Rate (COPACR) for communication of a
quasi-stationary source with limited buffer over a block fading
channel is formulated as follows\vspace{-8pt}
\begin{align}\label{E31}
&\underset{\gamma(\Sigma,\alpha),R,R_{s_j}(\Sigma,\alpha)}{\operatorname{min}}\;\text{E}[D] \;\;\;\text{subject to}\nonumber\\
&
\text{E}[\gamma]\leq\bar{P},\;\;\;\;\;\sum_{j=1}^{K}R_{s_j}=KbR,\;\;\;\;\;\;bR\leq
B_{\rm{max}},
\end{align}
\end{problem}\vspace{-4pt}
where $\text{E}[D]$ is given in \eqref{E22}.

The solution to Problem \ref{PR4} is obtained in three steps. We may
rewrite \eqref{E22} as follows\vspace{-2pt}
\begin{align}\label{E32}
\text{E}[D]=\text{Pr}\bigl(R>
C\left(\alpha,\gamma\right)\bigr)\biggl(\text{E}_s[\sigma_s^2]-\text{E}_{\Sigma}\left[\frac{\sum_{j=1}^{K}\sigma^2_{s_j}2^{-R_{s_j}}}{K}\right]
\biggr)+\text{E}_{\Sigma}\left[\frac{\sum_{j=1}^{K}\sigma^2_{s_j}2^{-R_{s_j}}}{K}\right].
\end{align}
For a given channel coding rate $R$, power adaptation only affects
the mean distortion in \eqref{E32} through the term
$\text{Pr}\bigl(R> C\left(\alpha,\gamma\right)\bigr)$. Thus, the
optimum $\gamma$ does not depend on $\Sigma$ (vector of source
states). We first consider the design (sub)problem \ref{PR7} below
to find the optimized power adaptation strategy for a given $R$.
Next, assuming a given $R$ and the resulting power adaptation
strategy, noting \eqref{E22}, we consider another design
(sub)problem \ref{PR8} for source rate allocation to different
blocks over a frame which aims at minimizing the term
$\text{E}_{\Sigma}\left[{\sum_{j=1}^{K}\sigma^2_{s_j}2^{-R_{s_j}}}\right]$.
As we shall see, the results of the two (sub)problems are derived in
terms of $R$, which is then directly obtained by solving problem
\ref{PR4}.

Problem \ref{PR7} below formulates the power adaptation strategy for
a given $R$. Note that\vspace{-2pt}
\begin{equation}
\text{E}_{\Sigma}\left[\frac{\sum_{j=1}^{K}\sigma^2_{s_j}2^{-R_{s_j}}}{K}\right]\leq
\text{E}_s[\sigma_s^2],\; \forall R_{s_j}\ge0.
\end{equation}
\begin{problem}\label{PR7}
With COPACR scheme and with a given channel coding rate $R$, the
power adaptation problem is formulated as follows\vspace{-8pt}
\begin{align}\label{E24}
\underset{\gamma}{\operatorname{min}}\;\text{Pr}(R>C(\alpha,\gamma))
\;\text{subject to}\;\;\;\;\;\text{E}[\gamma]\leq\bar{P}{.}
\end{align}
\end{problem}
\begin{proposition}\label{P6}
Solution to Problem \ref{PR7} for optimized power adaptation over a
block fading channel is given by\vspace{-4pt}
\begin{equation}\label{E28}
\gamma^*{(\alpha,R)}=\begin{cases}
\frac{2^{R}-1}{\alpha} &\text{if}\;\;\alpha\geq\frac{2^{R}-1}{q^*_1(R)} \\
0 &\text{if}\;\;\alpha<\frac{2^{R}-1}{q^*_1(R)}{,}
\end{cases}
\end{equation}
in which $q^*_1(R)$ is selected such that the power constraint is
satisfied with equality, i.e.,\vspace{-6pt}
\begin{equation}\label{E129}
\underset{\alpha\geq\frac{2^{R}-1}{q^*_1(R)}}{\int}{\frac{2^{R}-1}{\alpha}
f(\alpha)\,\mathrm{d}\alpha}=\bar{P}.
\end{equation}\vspace{-4pt}
Specifically for Rayleigh block fading channel, \eqref{E129} is
simplified to\vspace{-2pt}
\begin{equation}\label{E54}
\left(2^{R}-1\right)E_1\left(\frac{2^{R}-1}{q^*_1(R)}\right)=\bar{P}.
\end{equation}
\end{proposition}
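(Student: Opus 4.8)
The plan is to first make the outage event explicit, then shrink the feasible set of power policies to a binary serve/do-not-serve decision, next pin down the threshold structure by a Lagrangian (weak-duality) argument, and finally specialize to Rayleigh fading by a direct integration. Since $C(\alpha,\gamma)=\log_2(1+\alpha\gamma)$, the event $R>C(\alpha,\gamma)$ is equivalent to $\alpha\gamma<2^{R}-1$, so at channel gain $\alpha$ the minimum power that avoids outage is exactly $\frac{2^{R}-1}{\alpha}$. I would first argue that without loss of optimality the policy assigns to each $\alpha$ either $\gamma(\alpha)=\frac{2^{R}-1}{\alpha}$ or $\gamma(\alpha)=0$: any power in $\left(0,\frac{2^{R}-1}{\alpha}\right)$ leaves the block in outage while consuming power that could be freed, and any power exceeding $\frac{2^{R}-1}{\alpha}$ avoids outage but wastes the excess. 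Either modification reduces (or preserves) $\text{E}[\gamma]$ without increasing $\text{Pr}\bigl(R>C(\alpha,\gamma)\bigr)$, so it suffices to optimize over the serving set $A=\{\alpha:\gamma(\alpha)>0\}$, on which $\gamma=\frac{2^{R}-1}{\alpha}$.

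Next I would introduce a Lagrange multiplier $\mu\ge0$ and form $L(\gamma,\mu)=\text{Pr}\bigl(R>C(\alpha,\gamma)\bigr)+\mu\bigl(\text{E}[\gamma]-\bar{P}\bigr)$, then minimize the integrand $I\!\left(\alpha\gamma<2^{R}-1\right)+\mu\,\gamma$ pointwise over $\gamma\ge0$. Comparing the only two relevant candidate values, namely $1$ (at $\gamma=0$) and $\mu\frac{2^{R}-1}{\alpha}$ (at $\gamma=\frac{2^{R}-1}{\alpha}$), shows that the pointwise minimizer serves $\alpha$ precisely when $\mu\frac{2^{R}-1}{\alpha}\le1$, i.e.\ when $\alpha\ge\mu(2^{R}-1)$, and sets $\gamma=0$ otherwise; writing $\mu=1/q^*_1(R)$ recovers the threshold $\alpha\ge\frac{2^{R}-1}{q^*_1(R)}$ of \eqref{E28}. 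Optimality then follows by weak duality, which requires no convexity despite the indicator: for any feasible $\gamma$ and the chosen $\mu$, $\text{Pr}\bigl(R>C\bigr)\ge L(\gamma,\mu)\ge\inf_{\gamma'}L(\gamma',\mu)=L(\gamma^*,\mu)$, and since $\gamma^*$ attains the pointwise infimum while meeting $\text{E}[\gamma^*]=\bar{P}$ with equality, we get $L(\gamma^*,\mu)=\text{Pr}\bigl(R>C\bigr)$ evaluated at $\gamma^*$, hence $\gamma^*$ is globally optimal. Intuitively, the ratio of probability gained to power spent at state $\alpha$ is $\frac{\alpha}{2^{R}-1}$, increasing in $\alpha$, so a greedy (threshold) allocation serving the strongest gains first is optimal.

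The main obstacle I anticipate is justifying that the power constraint is tight and that the multiplier, equivalently $q^*_1(R)$, exists. Here I would observe that the served power $\int_{\alpha\ge\tau}\frac{2^{R}-1}{\alpha}f(\alpha)\,\mathrm{d}\alpha$ is continuous and strictly decreasing in the threshold $\tau$ and diverges as $\tau\to0^{+}$ (for Rayleigh because $\alpha^{-1}e^{-\alpha}\sim\alpha^{-1}$ is non-integrable at the origin), so for every finite $\bar{P}$ there is a unique $\tau=\frac{2^{R}-1}{q^*_1(R)}$ with $\int_{\alpha\ge\tau}\frac{2^{R}-1}{\alpha}f(\alpha)\,\mathrm{d}\alpha=\bar{P}$; this is \eqref{E129} and secures complementary slackness. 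Finally, specializing to Rayleigh fading $f(\alpha)=e^{-\alpha}$ and substituting $\tau=\frac{2^{R}-1}{q^*_1(R)}$ into \eqref{E129} yields $(2^{R}-1)\int_{\tau}^{\infty}\alpha^{-1}e^{-\alpha}\,\mathrm{d}\alpha=(2^{R}-1)E_1(\tau)=\bar{P}$ by the definition of $E_1$ in \eqref{E137}, which is exactly \eqref{E54}.
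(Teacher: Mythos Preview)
Your proof is correct and follows the standard threshold/Lagrangian route for outage minimization under an average power constraint; the paper itself does not give an argument beyond citing Proposition~4 of \cite{R1} and then reading off the Rayleigh specialization \eqref{E54}, so your self-contained derivation is exactly the content that the citation is meant to cover. The one place to tighten is the existence of $q_1^*(R)$ for a general fading density $f(\alpha)$: your divergence-at-$0$ argument is stated for Rayleigh, whereas the proposition claims \eqref{E129} for arbitrary $f$; it would be cleaner to note that monotonicity and continuity of $\tau\mapsto\int_{\alpha\ge\tau}\frac{2^{R}-1}{\alpha}f(\alpha)\,\mathrm{d}\alpha$ always hold, and to separate the case where even serving all $\alpha$ costs less than $\bar{P}$ (then the threshold is $0$, outage is $0$, and the constraint is slack).
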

\begin{proof}
The proof follows that of Proposition 4 in \cite{R1}. Based on
\eqref{E129}, for Rayleigh block fading channel $q_1^*(R)$ is to
satisfy \eqref{E54}. In fact $q_1^*(R)$ sets the SNR threshold below
which the channel outage occurs.
\end{proof}

For a given channel coding rate $R$ and the optimized power
adaptation strategy in Proposition \ref{P6}, the outage probability
$\bigl(R> C\left(\alpha,\gamma\right)\bigr)$ is fixed. Hence,
minimizing mean distortion in \eqref{E22} is equivalent to
minimizing
$\text{E}_{\Sigma}\left[{\sum_{j=1}^{K}\sigma^2_{s_j}2^{-R_{s_j}}}\right]$.
This leads us to Problem \ref{PR8} below which formulates the source
rate allocation design for a given $R$ within the COPACR scheme.
\begin{problem}\label{PR8}
For COPACR scheme with a given $R\leq\frac{B_{\rm{max}}}{b}$, the
set of optimum source coding rates, $R^*_{s_j}, j\in\{1,...,K\}$, is
given by the solution to the following optimization
problem\vspace{-6pt}
\begin{equation}\label{E8}
\begin{split}
&\underset{R_{s_j}}{\operatorname{min}}\;\text{E}_{\Sigma}\left[{\sum_{j=1}^{K}\sigma^2_{s_j}2^{-R_{s_j}}}\right]\;\;\text{subject
to:
}\\
& \sum_{j=1}^{K}R_{s_j}= KbR,\;\;\;\;\;\;\;R_{s_j}\geq 0.
\end{split}
\end{equation}
\end{problem}
\begin{proposition}\label{P7}
The solution to Problem \ref{PR8} for an arbitrary block fading
channel is given by\vspace{-6pt}
\begin{equation}
\begin{split}\label{E82}
R^*_{s_j}=&\left[\log_2\frac{\sigma_{s_j}^2}{\lambda(R)}\right]^+
\end{split}
\end{equation}
with
$\lambda(R)=\sqrt[n]{\frac{\sigma^2_{s_1}\times...\times\sigma^2_{s_n}}{2^{bKR}}},$
where $n$ is an integer in $\{1,2,...,K\}$ such that
$\sigma^2_{s_{n+1}}<\lambda(R)\leq\sigma^2_{s_{n}}.$
\end{proposition}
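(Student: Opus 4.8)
The plan is to recognize Problem~\ref{PR8} as a convex reverse water-filling problem and solve it through its Karush--Kuhn--Tucker (KKT) conditions. First I would note that for each fixed source-state vector $\Sigma$ the objective $\sum_{j=1}^{K}\sigma^2_{s_j}2^{-R_{s_j}}$ and the constraints $\sum_{j=1}^{K}R_{s_j}=KbR$, $R_{s_j}\ge 0$ involve only the rates of that frame; since the $R_{s_j}$ are allowed to depend on $\Sigma$, the expectation $\text{E}_{\Sigma}[\cdot]$ decouples over realizations and it suffices to minimize the inner sum for each fixed $\Sigma$. Each summand $\sigma^2_{s_j}2^{-R_{s_j}}$ is convex in $R_{s_j}$ and the feasible region is convex, so the KKT conditions are both necessary and sufficient for the global minimizer.

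Next I would introduce the Lagrangian
\[
L=\sum_{j=1}^{K}\sigma^2_{s_j}2^{-R_{s_j}}+\mu\Bigl(\sum_{j=1}^{K}R_{s_j}-KbR\Bigr)-\sum_{j=1}^{K}\nu_j R_{s_j},
\]
with multiplier $\mu$ for the sum-rate equality and $\nu_j\ge 0$ for the nonnegativity constraints, together with complementary slackness $\nu_j R_{s_j}=0$. Setting $\partial L/\partial R_{s_j}=0$ gives $\sigma^2_{s_j}2^{-R_{s_j}}\ln 2=\mu-\nu_j$. Writing $\lambda=\mu/\ln 2$, the active case $R_{s_j}>0$ forces $\nu_j=0$ and hence $R^*_{s_j}=\log_2(\sigma^2_{s_j}/\lambda)$, whereas the inactive case $R_{s_j}=0$ requires $\sigma^2_{s_j}\ln 2\le\mu$, i.e. $\sigma^2_{s_j}\le\lambda$. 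Combining the two cases yields the claimed form $R^*_{s_j}=[\log_2(\sigma^2_{s_j}/\lambda)]^+$.

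I would then fix the water level $\lambda$ using the sum-rate constraint. Because the variances are arranged in descending order, the active set is a prefix $\{1,\dots,n\}$; substituting $R^*_{s_j}=\log_2(\sigma^2_{s_j}/\lambda)$ for $j\le n$ into $\sum_{j=1}^{n}R^*_{s_j}=KbR$ gives $\prod_{j=1}^{n}\sigma^2_{s_j}/\lambda^{n}=2^{bKR}$, whence $\lambda(R)=\sqrt[n]{(\sigma^2_{s_1}\cdots\sigma^2_{s_n})/2^{bKR}}$, matching the statement. Self-consistency of this prefix requires $R^*_{s_j}\ge 0$ for $j\le n$ and $R^*_{s_j}=0$ for $j>n$, which is exactly the condition $\sigma^2_{s_{n+1}}<\lambda(R)\le\sigma^2_{s_{n}}$.

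The main obstacle is to show that this threshold condition singles out a unique admissible integer $n$. I would establish this by a monotonicity argument: as the assumed active-set size grows, the induced level $\lambda$ decreases monotonically, so there is exactly one prefix for which the interval test holds; I would also verify the boundary case $R=B_{\rm{max}}/b$, in which all $K$ blocks may be active. Uniqueness of the minimizer then follows from strict convexity of the objective, completing the proof.
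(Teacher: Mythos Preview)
Your proposal is correct and follows exactly the approach the paper invokes: the paper's proof simply states ``Using reverse water filling, the proof is straightforward,'' and your KKT derivation is precisely the standard reverse water-filling argument spelled out in detail. Your additional discussion of decoupling over $\Sigma$, the prefix structure of the active set, and the uniqueness of $n$ are all sound refinements that the paper leaves implicit.
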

\begin{proof}
Using reverse water filling \cite{R3}, the proof is straightforward.
\end{proof}

Now Proposition \ref{P10} below gives solution to Problem \ref{PR4}.
\begin{proposition}\label{P10}
The solution to Problem \ref{PR4} for an arbitrary block fading
channel is given by\vspace{-4pt}
\begin{equation}\label{E10}
\begin{split}
R^*\!\!=\!\!\underset{R:\;0\leq bR\leq
B_{\rm{max}}}{\operatorname{argmin}}\;\!\!\text{E}[D]\!\!=\!\!\text{E}_s[\sigma_s^2]\text{Pr}\biggl(\alpha<\frac{2^{R}-1}{q^*_1(R)}\biggr)\!\!+\!\!
\biggl(\!1\!-\!\text{Pr}\biggl(\!\alpha<\frac{2^{R}-1}{q^*_1(R)}\!\biggr)\!\biggr)\text{E}_{\Sigma}\!\!\left[\!\frac{n\lambda(R)\!\!+\!\!\sum_{j=n+1}^{K}\sigma^2_{s_j}}{K}\!\right]
\end{split}
\end{equation}
with $q^*_1(R)$ satisfying \eqref{E129} and $\lambda(R)$ given in
Proposition \ref{P7}. Consequently with the obtained $R^*$,
$\gamma^*{(\alpha,R^*)}$ and $R^*_{s_j}$ may be derived using
Propositions \ref{P6} and \ref{P7}, respectively.
\end{proposition}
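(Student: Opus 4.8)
The plan is to assemble Proposition \ref{P10} from the two sub-problem solutions already established, exploiting the separable structure of the objective that the preceding discussion set up. I would begin from the rewritten mean distortion \eqref{E32}, in which the power $\gamma$ enters only through the scalar $\text{Pr}(R>C(\alpha,\gamma))$ and the coefficient multiplying it, $\text{E}_s[\sigma_s^2]-\text{E}_\Sigma[\frac{\sum_{j=1}^{K}\sigma^2_{s_j}2^{-R_{s_j}}}{K}]$, is nonnegative by the inequality stated just before Problem \ref{PR7}. Hence, for a fixed $R$, minimizing $\text{E}[D]$ over $\gamma$ is equivalent to minimizing the outage probability, which is precisely Problem \ref{PR7}; the minimizer $\gamma^*(\alpha,R)$ from Proposition \ref{P6} depends only on $\alpha$ and $R$ and not on $\Sigma$, confirming the decoupling asserted in the text.

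Next I would substitute this optimal power policy to pin down the outage event. Since $\gamma^*(\alpha,R)=(2^R-1)/\alpha$ yields $C(\alpha,\gamma^*)=R$ exactly on $\{\alpha\geq (2^R-1)/q^*_1(R)\}$, while $\gamma^*=0$ gives $C=0<R$ below that threshold, the outage event reduces to $\{\alpha<(2^R-1)/q^*_1(R)\}$. Therefore $\text{Pr}(R>C(\alpha,\gamma^*))=\text{Pr}(\alpha<(2^R-1)/q^*_1(R))$, with $q^*_1(R)$ determined by \eqref{E129}. With the outage probability now frozen as a function of $R$, the remaining dependence of \eqref{E32} on $\{R_{s_j}\}$ sits in the term $\text{E}_\Sigma[\sum_{j=1}^{K}\sigma^2_{s_j}2^{-R_{s_j}}]$ with the nonnegative coefficient $1-\text{Pr}(\cdots)$, so minimizing over the source rates subject to $\sum_{j=1}^{K}R_{s_j}=KbR$ is exactly Problem \ref{PR8}.

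Invoking the reverse-water-filling solution of Proposition \ref{P7} and evaluating the objective at the optimum, using $\sigma^2_{s_j}2^{-R^*_{s_j}}=\lambda(R)$ for the $n$ active blocks and $\sigma^2_{s_j}$ for the inactive ones, collapses this term to $\text{E}_\Sigma[\frac{n\lambda(R)+\sum_{j=n+1}^{K}\sigma^2_{s_j}}{K}]$. Substituting both reductions back into \eqref{E32} and regrouping the two factors in $\text{Pr}(\alpha<(2^R-1)/q^*_1(R))$ yields the displayed expression \eqref{E10} as a function of $R$ alone. The final step is the scalar minimization over $R\in[0,B_{\rm{max}}/b]$ indicated by the $\operatorname{argmin}$; once $R^*$ is found, the associated $\gamma^*(\alpha,R^*)$ and $R^*_{s_j}$ follow immediately from Propositions \ref{P6} and \ref{P7}.

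I expect the main point to be conceptual rather than computational: the justification that the three-way joint optimization genuinely decouples into the sequence of Problems \ref{PR7}, \ref{PR8} and the outer minimization in $R$, with no loss of optimality. This hinges entirely on the signs of the two coefficients noted above and on the fact that $\gamma^*$ is independent of $\Sigma$. The remaining work is substitution and bookkeeping, and the outer minimization over $R$ is left in $\operatorname{argmin}$ form precisely because it need not admit a closed-form solution.
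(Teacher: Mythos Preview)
Your proposal is correct and follows essentially the same route as the paper's own proof: decouple the joint optimization by first applying Proposition \ref{P6} to fix the outage term, then Proposition \ref{P7} to collapse the source-distortion term, and finally reduce to a scalar search over $R$. The paper's argument is considerably terser, but your added justification for why the decoupling incurs no loss of optimality (via the nonnegativity of the two coefficients and the $\Sigma$-independence of $\gamma^*$) is exactly the content implicit in the discussion preceding Problems \ref{PR7} and \ref{PR8}.
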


\begin{proof}
Using Propositions \ref{P6} and \ref{P7}, we replace the optimized
power adaptation and source rate allocation results into
\eqref{E22}, which provides the mean distortion in terms of a single
variable $R$. Hence, the optimized value of $R$ may be obtained
numerically as indicated in \eqref{E1}. Thus, the proof is complete.
Our extensive studies in the case of Rayleigh block fading channel
reveals that the mean distortion in \eqref{E10} is a convex function
of $R$ and hence indicates a unique minimum at $R^*$.
\end{proof}

Proposition \ref{P14} below provides the mean distortion achieved by
COPACR.
\begin{proposition}\label{P14}
For transmission of a quasi-stationary source over a Rayleigh block
fading channel, the COPACR scheme achieves the mean distortion
\begin{align}\label{E81}
\text{E}[D]=\text{E}_s[\sigma_s^2]\left(1-\exp(-\frac{2^{R^*}-1}{q^*_1(R^*)})\right)+\exp\left(-\frac{2^{R^*}-1}{q^*_1(R^*)}\right)\text{E}_{\Sigma}\left[\frac{n\lambda(R^*)+\sum_{j=n+1}^{K}\sigma^2_{s_j}}{K}\right]
\end{align}
 with $n$, $R^*$ and $\lambda(R^*)$ in Proposition \ref{P7}
and
\begin{equation}\label{E91}
\left(2^{R^*}-1\right)E_1\left(\frac{2^{R^*}-1}{q^*_1(R^*)}\right)=\bar{P}.
\end{equation}
\end{proposition}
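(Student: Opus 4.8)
The plan is to assemble \eqref{E81} by inserting the optimized quantities from Propositions \ref{P6}, \ref{P7} and \ref{P10} into the fixed-rate mean distortion \eqref{E22}, specialized to Rayleigh fading. This is essentially a bookkeeping exercise, so I do not expect a genuine obstacle; the one point demanding care is tracking which factors depend on the source state vector $\Sigma$ and which do not.

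First I would evaluate the inner source-distortion sum using the reverse water-filling solution $R^*_{s_j}=[\log_2(\sigma_{s_j}^2/\lambda(R))]^+$ of Proposition \ref{P7}. Splitting the $K$ source blocks of a frame into the $n$ blocks with $\sigma_{s_j}^2\ge\lambda(R)$, for which $\sigma_{s_j}^2 2^{-R^*_{s_j}}=\lambda(R)$, and the remaining $K-n$ blocks with $\sigma_{s_j}^2<\lambda(R)$, for which $R^*_{s_j}=0$ and $\sigma_{s_j}^2 2^{-R^*_{s_j}}=\sigma_{s_j}^2$, the sum collapses to
\[
\sum_{j=1}^{K}\sigma_{s_j}^2 2^{-R^*_{s_j}} = n\lambda(R) + \sum_{j=n+1}^{K}\sigma_{s_j}^2 .
\]
Taking $\text{E}_\Sigma$ of the normalized sum reproduces the second expectation in \eqref{E81}. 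Note that $n$ and $\lambda(R)$ are determined by the realized ordering of $\sigma_{s_1}^2,\dots,\sigma_{s_K}^2$ and so must remain inside $\text{E}_\Sigma$; this part of the argument uses only Proposition \ref{P7} and is thus valid for an arbitrary fading law.

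Next I would compute the outage probability, where the Rayleigh assumption enters. By Proposition \ref{P6} the optimized power schedule is in outage precisely when $\alpha<(2^R-1)/q^*_1(R)$, an event that, as already observed before Problem \ref{PR7}, is independent of $\Sigma$. With $\alpha$ exponentially distributed with unit mean this gives $\text{Pr}(R>C(\alpha,\gamma))=1-\exp(-(2^R-1)/q^*_1(R))$, hence $1-\text{Pr}(R>C(\alpha,\gamma))=\exp(-(2^R-1)/q^*_1(R))$. Because this factor is $\Sigma$-free it may be pulled outside $\text{E}_\Sigma$.

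Finally I would substitute both pieces into \eqref{E22}, set $R=R^*$ as fixed by Proposition \ref{P10}, and collect terms to obtain \eqref{E81}, recalling that the outage-distortion coefficient $\text{E}_s[\sigma_s^2]$ is already the per-frame average full-variance distortion $\text{E}_\Sigma[\sum_j\sigma_{s_j}^2/K]$. The companion power-constraint condition \eqref{E91} is nothing but \eqref{E54} of Proposition \ref{P6} evaluated at $R^*$. The only genuine pitfall is the $\Sigma$-dependence flagged above: treating the random water-filling level $\lambda(R)$ as a constant would wrongly let it escape the expectation and corrupt the final expression.
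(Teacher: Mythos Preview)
Your proposal is correct and matches the paper's approach: the paper's proof simply says that \eqref{E81} and \eqref{E91} follow directly from Proposition~\ref{P10} specialized to Rayleigh fading, and your write-up just makes those substitutions explicit by unpacking Propositions~\ref{P6} and~\ref{P7} (which feed into Proposition~\ref{P10}) and inserting the exponential cdf.
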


\begin{proof}
Using Proposition \ref{P10} for Rayleigh block fading channel,
achieving \eqref{E81} and \eqref{E91} is straightforward.
\end{proof}

\begin{proposition}\label{P15}
For a Rayleigh block fading channel, the COPACR scheme achieves the
mean distortion exponents $\Delta^{B_{\rm{max}}}_{\textit{MD}}=0$
and $\Delta_{\textit{MD}}=br_1$, where $r_1\in[0,1)$ denotes the
COPACR buffer unconstrained multiplexing gain which can be
calculated by numerically approximating $R^*$ as
$R^*={r}_1\log_2\bar{P}+{r}_0$.
\end{proposition}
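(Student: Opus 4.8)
The plan is to feed the closed-form mean distortion \eqref{E81} and the accompanying power--rate relation \eqref{E91} into the definitions \eqref{E83}, handling the two regimes separately and relying throughout on the small-argument expansion $E_1(x)\cong-E_c-\ln(x)$ from \eqref{E98}.

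For the buffer constrained exponent, the key observation is that the channel rate is capped, $R^*\le B_{\rm{max}}/b$, so the reverse water-filling level of Proposition \ref{P7} satisfies $\lambda(R^*)\ge\sqrt[n]{\sigma^2_{s_1}\cdots\sigma^2_{s_n}/2^{KB_{\rm{max}}}}$, a strictly positive constant independent of $\bar{P}$; hence the no-outage term in \eqref{E81} is bounded below by a positive constant. I would then show the outage term vanishes: since $2^{R^*}-1\le 2^{B_{\rm{max}}/b}-1$ is bounded, letting $\bar{P}\to\infty$ in \eqref{E91} forces $E_1\bigl((2^{R^*}-1)/q^*_1(R^*)\bigr)=\bar{P}/(2^{R^*}-1)\to\infty$, so $(2^{R^*}-1)/q^*_1(R^*)\to 0$ and $1-\exp(-(2^{R^*}-1)/q^*_1(R^*))\to 0$. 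Thus $\text{E}[D]$ tends to a strictly positive constant and \eqref{E83} gives $\Delta^{B_{\rm{max}}}_{\textit{MD}}=0$, which is precisely the distortion saturation noted in the abstract.

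For the buffer unconstrained exponent I would first send $B_{\rm{max}}\to\infty$ to remove the cap, allowing $R^*$ to grow, and adopt the stated affine fit $R^*=r_1\log_2\bar{P}+r_0$ with $r_1\in[0,1)$. Since $R^*\to\infty$, reverse water-filling eventually fills every source block, so $n=K$, the residual sum $\sum_{j=n+1}^{K}\sigma^2_{s_j}$ disappears, and $\lambda(R^*)=\sqrt[K]{\sigma^2_{s_1}\cdots\sigma^2_{s_K}}\,2^{-bR^*}$. The no-outage term of \eqref{E81} then behaves as $\text{E}_\Sigma\bigl[\sqrt[K]{\sigma^2_{s_1}\cdots\sigma^2_{s_K}}\bigr]2^{-br_0}\bar{P}^{-br_1}$, i.e.\ it scales as $\bar{P}^{-br_1}$.

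The main obstacle is to show the outage term is asymptotically negligible against this $\bar{P}^{-br_1}$ contribution. Substituting $2^{R^*}\approx 2^{r_0}\bar{P}^{r_1}$ into \eqref{E91} yields $E_1\bigl((2^{R^*}-1)/q^*_1(R^*)\bigr)\approx 2^{-r_0}\bar{P}^{1-r_1}$; because $r_1<1$ the right-hand side diverges, so the argument tends to zero and the expansion in \eqref{E98} gives an outage probability decaying like $\exp(-2^{-r_0}\bar{P}^{1-r_1})$, which falls faster than any power of $\bar{P}$. Consequently the first term of \eqref{E81} is dominated by the second, $\text{E}[D]$ scales as $\bar{P}^{-br_1}$, and \eqref{E83} yields $\Delta_{\textit{MD}}=br_1$. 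I expect the delicate point to be justifying $r_1<1$: were $r_1=1$, relation \eqref{E91} would keep $(2^{R^*}-1)/q^*_1(R^*)$ bounded away from zero, the outage would not vanish, and $\text{E}[D]$ would fail to decay, so any rate growth that actually drives the distortion to zero must use a multiplexing gain strictly below one, consistent with the numerical fit defining $r_1$.
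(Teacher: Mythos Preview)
Your proposal is correct and follows essentially the same route as the paper's proof in Appendix~\ref{A3}: both arguments use the bounded rate $R^*\le B_{\rm{max}}/b$ together with \eqref{E91} to show the outage threshold $(2^{R^*}-1)/q^*_1(R^*)\to 0$ and hence $\text{E}[D]$ saturates at a positive constant in the buffer-constrained case, and both invoke $n=K$, the affine fit $R^*=r_1\log_2\bar{P}+r_0$, and the $E_1$ expansion from \eqref{E98} to isolate the $\bar{P}^{-br_1}$ scaling in the unconstrained case. The only cosmetic difference is that the paper records the exact limiting distortion \eqref{E4} (reused later in Section~\ref{SVIB}), whereas you merely bound it below, which is all that is needed for $\Delta^{B_{\rm{max}}}_{\textit{MD}}=0$.
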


\begin{proof}
The proof is provided in Appendix \ref{A3}.
\end{proof}
The performance of COPACR scheme is studied and compared in Section
\ref{SVI}.

\section{Constant Power schemes}\label{SV}
In this section, two constant power schemes for delay-limited
transmission of a quasi-stationary source over a block fading
channel with buffer and power limitations are considered. In the
first scheme, the source and channel coding rates are adjusted based
on the source or channel states to minimize the mean distortion;
hence the scheme is labeled as Source and Channel Optimized Rate
Adaptation with Constant Power (SCORACP). In the second scheme with
Constant Rate and Constant Power (CRCP), the aim is to find the
optimized fixed channel rate and adaptive source coding rates such
that the mean distortion is minimized.

\subsection{Source and Channel Optimized Rate Adaptation with
Constant Power} With SCORACP and constant transmission power
$\bar{P}$, the instantaneous capacity is
$C(\alpha)=\log_2{\left(1+\alpha\bar{P}\right)}$. As discussed, the
source and channel coding rates may be controlled with respect to
the channel state such that $R\leq C(\alpha)$. Thus, noting the
buffer size constraint $B_{\rm{max}}$, we have the next design
problem.

\begin{problem}\label{PR3}
The problem of SCORACP for delay-limited communication of a
quasi-stationary source with limited buffer over a block fading
channel is formulated as follows for minimum mean
distortion,\vspace{-12pt}
\begin{align}\label{E17}
&\underset{{R}_{sj}(\Sigma,\alpha)}{\operatorname{min}}\;{\text{E}_{\Sigma,\alpha}\left[{\sum_{j=1}^{K}\sigma^2_{s_j}2^{-R_{s_j}}}\right]}\;\;\text{   subject to}\\
&\;{\sum_{j=1}^{K}R_{s_j}}\leq K B_{\rm{max}},\;\;\;\;\;
{\sum_{j=1}^{K}R_{s_j}}\leq K b C(\alpha)\label{E120}.
\end{align}
\end{problem}

\begin{proposition}\label{P2}
Solution to Problem \ref{PR3}, denoted by $R^*_{s_j},
j\in\{1,...,K\}$, for an arbitrary block fading channel is given
by\vspace{-4pt}
\begin{align}\vspace{-8pt}
R^*_{s_j}=\begin{cases}\left[\log_2\frac{\sigma_{s_j}^2}{\mathbf{\tilde{\lambda}}}\right]^+&\text{if}\;\alpha\geq
c\\
\left[\log_2\frac{\sigma_{s_j}^2}{\lambda}\right]^+&\text{if}\;\alpha<
c,\;\;\;d_{1,m}\leq\alpha<d_{2,m},\forall m\in{\{1,...,K\}},
\end{cases}
\end{align}
where $
\lambda=\sqrt[m]{\frac{\sigma_{s_1}^2...\sigma^2_{s_m}}{(1+\alpha\bar{P})^{bK}}},\;\mathbf{\tilde{\lambda}}=\sqrt[n]{\frac{\sigma_{s_1}^2...\sigma^2_{s_n}}{2^{KB_{\rm{max}}}}}$
and $n$ is an integer in $\{1,2,...,K\}$ such that
$\sigma^2_{s_{n+1}}<\mathbf{\tilde{\lambda}}\leq\sigma^2_{s_{n}}$;
and\vspace{-8pt}
\begin{align}\label{E57}
d_{1,m}=\frac{\frac{\sqrt[bK]{\sigma^2_{s_1}...\sigma^2_{s_m}}}{(\sigma^2_{s_m})^{\frac{m}{bK}}}-1}{\bar{P}},\;\;\;\;\;
 d_{2,m}=\begin{cases}\frac{\frac{\sqrt[bK]{\sigma^2_{s_1}...\sigma^2_{s_m}}}{(\sigma^2_{s_{m+1}})^{\frac{m}{bK}}}-1}{\bar{P}}&\text{if}\;m\neq K\\
\infty&\text{if} \;m=K \end{cases},\;\;\;\;\;
c=\frac{2^{\frac{B_{\rm{max}}}{b}}-1}{\bar{P}}.
\end{align}
\end{proposition}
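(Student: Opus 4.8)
The plan is to exploit the fact that the two inequality constraints in Problem~\ref{PR3} are both upper bounds on the same quantity $\sum_{j=1}^{K}R_{s_j}$, so that they collapse into a single effective sum-rate budget. First I would note that the objective $\sum_{j=1}^{K}\sigma^2_{s_j}2^{-R_{s_j}}$ is convex and strictly decreasing in each $R_{s_j}$; hence at the optimum we want $\sum_{j=1}^{K}R_{s_j}$ as large as feasibility permits, which forces $\sum_{j=1}^{K}R^*_{s_j}=\min\{KB_{\rm{max}},\,KbC(\alpha)\}$. Comparing the two bounds, $KbC(\alpha)=Kb\log_2(1+\alpha\bar{P})<KB_{\rm{max}}$ exactly when $1+\alpha\bar{P}<2^{B_{\rm{max}}/b}$, i.e.\ when $\alpha<c$ with $c=(2^{B_{\rm{max}}/b}-1)/\bar{P}$. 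This identifies $c$ as the threshold at which the binding constraint switches from the channel-capacity limit (for $\alpha<c$) to the buffer limit (for $\alpha\geq c$), matching the $c$ in~\eqref{E57}.

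Second, within each regime the total budget $\sum_{j=1}^{K}R_{s_j}$ is a fixed constant, so Problem~\ref{PR3} reduces exactly to the source-rate-allocation Problem~\ref{PR8} under the identification $KbR=\sum_{j=1}^{K}R_{s_j}$, and I would invoke Proposition~\ref{P7} directly in each case. For $\alpha\geq c$ the budget is $KB_{\rm{max}}$, i.e.\ $bR=B_{\rm{max}}$, so the water level becomes $\lambda(R)=\sqrt[n]{\sigma^2_{s_1}\cdots\sigma^2_{s_n}/2^{KB_{\rm{max}}}}=\mathbf{\tilde{\lambda}}$, giving the first case. For $\alpha<c$ the budget is $KbC(\alpha)$, i.e.\ $R=\log_2(1+\alpha\bar{P})$, so $2^{bKR}=(1+\alpha\bar{P})^{bK}$ and the water level becomes $\lambda=\sqrt[m]{\sigma^2_{s_1}\cdots\sigma^2_{s_m}/(1+\alpha\bar{P})^{bK}}$, giving the second case. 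The reverse-water-filling form $R^*_{s_j}=[\log_2(\sigma^2_{s_j}/\lambda)]^+$ then follows immediately in both regimes.

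The remaining task is to pin down $d_{1,m}$ and $d_{2,m}$, which record how the number $m$ of active (positive-rate) source blocks varies with $\alpha$ in the $\alpha<c$ regime. By Proposition~\ref{P7}, exactly $m$ blocks are active when $\sigma^2_{s_{m+1}}<\lambda\leq\sigma^2_{s_m}$; substituting $\lambda=\sqrt[m]{\sigma^2_{s_1}\cdots\sigma^2_{s_m}/(1+\alpha\bar{P})^{bK}}$ and solving each inequality for $\alpha$ yields the two endpoints in~\eqref{E57}, with $d_{2,K}=\infty$ since no $(K+1)$-th block exists. I expect the only delicate point to be the boundary bookkeeping: one must verify, using the assumed descending order $\sigma^2_{s_1}\geq\cdots\geq\sigma^2_{s_K}$, that the intervals $[d_{1,m},d_{2,m})$ tile $[0,c)$ without gaps or overlap, so that the piecewise rule is well defined for every $\alpha$. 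Convexity of the objective guarantees that this water-filling stationary point is the unique global minimum, so no optimality check beyond Proposition~\ref{P7} is required.
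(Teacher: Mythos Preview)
Your proposal is correct and follows essentially the same route as the paper: split on which of the two sum-rate bounds is tighter (giving the threshold $c$), then apply reverse water-filling in each regime to obtain the water levels $\tilde{\lambda}$ and $\lambda$ and the active-block intervals $[d_{1,m},d_{2,m})$. The only cosmetic difference is that the paper redoes the Lagrangian differentiation from scratch in Appendix~\ref{A4}, whereas you recognize the reduction to Problem~\ref{PR8} and invoke Proposition~\ref{P7} directly, which is a cleaner but equivalent shortcut.
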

\begin{proof}
The proof is provided in Appendix \ref{A4}.
\end{proof}

The next two propositions quantify the mean distortion performance
of SCORACP.
\begin{proposition}\label{P4}
The mean distortion obtained by SCORACP for transmission of a
quasi-stationary source over a Rayleigh block fading channel is
given by\vspace{-4pt}
\begin{align}\label{E42}
&\text{E}[D]=\frac{1}{K}\text{E}_{\Sigma}\left[\left(n\mathbf{\tilde{\lambda}}+\sum_{j=n+1}^{K}\sigma^2_{s_j}\right)\exp(-c)+\sum_{m=1}^{K}
\left\{m\!\sqrt[m]{\sigma_{s_1}^2...\sigma^2_{s_m}}\bar{P}^{-1}\exp(\!\frac{1}{\bar{P}}\!)\right.\right.\nonumber\\
&\left.\left.\left((1+d_{1,m}\bar{P}\!)^{-\frac{bK}{m}+1}\!E_{\frac{bK}{m}}(\frac{1}{\bar{P}}+d_{1,m}\!)
-(1+\min(d_{2,m},c)\bar{P})^{-\frac{bK}{m}+1}E_{\frac{bK}{m}}(\frac{1}{\bar{P}}+\min(d_{2,m},c))\right)+\nonumber\right.\right.\\
&\left.\left.\sum_{i=m+1}^{K}\!\sigma^2_{s_i}\left(\exp(-d_{1,m})-\exp(-\min(d_{2,m},c))\right)\right\}\right],
\end{align}
where $\mathbf{\tilde{\lambda}}$, $n$, $d_{1,m}$ and $d_{2,m}$ are
defined in Proposition \ref{P2}.
\end{proposition}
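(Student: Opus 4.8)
The plan is to begin from the general mean-distortion expression \eqref{E1} and observe that, under SCORACP, the source rates are chosen so that the constraint $\sum_{j}R_{s_j}\le KbC(\alpha)$ in \eqref{E120} always holds, i.e.\ the effective channel rate $R=\frac{1}{Kb}\sum_j R_{s_j}$ never exceeds $C(\alpha)$. Hence the outage indicator $I(R>C(\alpha,\gamma))$ in \eqref{E1} vanishes identically and the mean distortion collapses to $\text{E}[D]=\frac{1}{K}\text{E}_{\Sigma,\alpha}\!\left[\sum_{j=1}^{K}\sigma^2_{s_j}2^{-R^*_{s_j}}\right]$, with $R^*_{s_j}$ supplied by Proposition \ref{P2}. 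This identification is the conceptual step that removes the second (outage) term present in the general formula.

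Next I would evaluate the inner summand region by region using the reverse water-filling structure of Proposition \ref{P2}. In the regime $d_{1,m}\le\alpha<d_{2,m}$ with $\alpha<c$, exactly the $m$ largest-variance blocks are active at the common water level $\lambda=\sqrt[m]{\sigma^2_{s_1}\cdots\sigma^2_{s_m}}\,(1+\alpha\bar{P})^{-bK/m}$, so that $\sum_j\sigma^2_{s_j}2^{-R^*_{s_j}}=m\lambda+\sum_{j=m+1}^{K}\sigma^2_{s_j}$; in the regime $\alpha\ge c$ the buffer constraint caps the active set at the $n$ blocks identified in Proposition \ref{P2}, giving $n\tilde{\lambda}+\sum_{j=n+1}^{K}\sigma^2_{s_j}$. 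Inserting the Rayleigh density $f(\alpha)=e^{-\alpha}$ and integrating the $\alpha\ge c$ piece over $[c,\infty)$ produces at once the first term of \eqref{E42} with its $\exp(-c)$ factor, while the constant-variance portions of the $\alpha<c$ pieces integrate over $[d_{1,m},\min(d_{2,m},c))$ to give the $\sum_{i=m+1}^{K}\sigma^2_{s_i}\bigl(\exp(-d_{1,m})-\exp(-\min(d_{2,m},c))\bigr)$ terms.

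The only nontrivial computation, and the step I expect to be the main obstacle, is the integral of the water level against the exponential density, $m\sqrt[m]{\sigma^2_{s_1}\cdots\sigma^2_{s_m}}\int(1+\alpha\bar{P})^{-bK/m}e^{-\alpha}\,d\alpha$ taken over $[d_{1,m},\min(d_{2,m},c))$. Here I would apply the change of variable $u=\tfrac{1}{\bar{P}}+\alpha$, under which $1+\alpha\bar{P}=\bar{P}u$ and $e^{-\alpha}=e^{1/\bar{P}}e^{-u}$, reducing the integral to $\bar{P}^{-bK/m}e^{1/\bar{P}}\int u^{-bK/m}e^{-u}\,du$. Writing this as a difference of two infinite tails and invoking the identity $\int_x^\infty u^{-p}e^{-u}\,du=x^{1-p}E_p(x)$, which follows from the definition of $E_p$ in \eqref{E137} via the substitution $u=xt$, and finally rewriting each prefactor through $\tfrac{1}{\bar{P}}+d=(1+d\bar{P})/\bar{P}$ to cancel the stray powers of $\bar{P}$, yields precisely the $(1+d_{1,m}\bar{P})^{-bK/m+1}E_{bK/m}(\cdot)$ structure and the $\bar{P}^{-1}\exp(1/\bar{P})$ factor appearing in \eqref{E42}. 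Summing over $m\in\{1,\dots,K\}$ and taking the expectation $\text{E}_\Sigma$ over the source states then assembles the full expression, completing the derivation.
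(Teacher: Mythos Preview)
Your proposal is correct and follows essentially the same route as the paper: decompose the expectation in \eqref{E1} into the buffer-active region $\alpha\ge c$ and the capacity-active regions $d_{1,m}\le\alpha<\min(d_{2,m},c)$ via Proposition~\ref{P2}, then reduce the $(1+\alpha\bar{P})^{-bK/m}e^{-\alpha}$ integral to the $E_p$ form of \eqref{E137} by the substitution $u=\tfrac{1}{\bar{P}}+\alpha=(1+\alpha\bar{P})/\bar{P}$. The paper's proof is the same argument, only phrased with two nominally distinct substitutions $u_1=(1+\alpha\bar{P})/\bar{P}$ and $u_2=1+\alpha\bar{P}$ for the cases $bK/m=1$ and $bK/m>1$, whereas your single substitution together with the identity $\int_x^\infty u^{-p}e^{-u}\,du=x^{1-p}E_p(x)$ handles both uniformly.
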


\begin{proof}
Noting \eqref{E1} and Proposition \ref{P2}, we obtain
\begin{align}\label{E56}
\text{E}[D]\!\!=\!\!\frac{1}{K}\text{E}_{\Sigma}\!\!\left[\!\!\int_{\alpha>c}\!\left(\!\!n\mathbf{\tilde{\lambda}}\!\!\!+\!\!\!\!\sum_{j=n+1}^{K}\sigma^2_{s_j}\!\!\right)f(\alpha)d\alpha\!\!+\!\sum_{m=1}^{K}
\!\int_{\!\!\alpha\leq
c,\;d_{1,m}\leq\alpha<d_{2,m}}\!\!\!\left(\!m\sqrt[m]{\frac{\sigma_{s_1}^2...\sigma^2_{s_m}}{\!(1\!+\!\!\alpha\bar{P}\!)^{bK}}}\!\!+\!\!\!\!\sum_{i=m+1}^{K}\!\sigma^2_{s_i}\!\right)\!f(\alpha)d\alpha\!\!\right]\!.
\end{align}
We change the variables as $u_1=\frac{1+\alpha\bar{P}}{\bar{P}}$ and
$u_2=1+\alpha\bar{P}$, respectively for $\frac{bK}{m}=1$ and
$\frac{bK}{m}>1$ in the second integral. Hence, noting the
exponentially distributed channel gain and using \eqref{E137},
equation \eqref{E56} is rewritten as \eqref{E42}.
\end{proof}

\begin{proposition}\label{P16}
For a Rayleigh block fading channel and with large power $\bar{P}$,
the SCORACP scheme achieves the mean distortion exponents
$\Delta^{B_{\rm{max}}}_{\textit{MD}}=0$ and $\Delta_{\textit{MD}}=1$
\end{proposition}

\begin{proof}
We first consider the buffer constrained scenario. As denoted, it
can intuitively be seen that $\Delta^{B_{\rm{max}}}_{\textit{MD}}$
of SCORACP is zero. However, because we need the mean distortion for
large power constraint in Section \ref{SVIB} we bring here the
steps. For large power $\bar{P}\rightarrow \infty$ and from
\eqref{E57}, we have\vspace{-12pt}
\begin{align}\label{E93}
c\rightarrow 0,\;\;\;\;\;d_{1,m} \rightarrow 0 \;\;\;\forall
m\in\{1,...,K\},\;\;\;\;\;d_{2,m}\rightarrow\begin{cases} 0 &\forall
m\in\{1,...,K-1\}\\
\infty &m=K.
\end{cases}
\end{align}
Note that for $b>1$ and $m\in\{1,...,K\}$, $\frac{bK}{m}$ is greater
than 1. Thus for $\bar{P}\rightarrow\infty$ and $b>1$, utilizing
\eqref{E98}, we have\vspace{-12pt}
\begin{align}
E_{\frac{bK}{m}}(\frac{1}{\bar{P}}+d_{1,m}\!)=\begin{cases}\frac{1}{\frac{bK}{m}-1}
&m\in\{1,...,K-1\}\text{ and }b=1\\
E_{1}(\frac{1}{\bar{P}}+d_{1,m}\!) & m=K \text{ and }b=1\\
\frac{1}{\frac{bK}{m}-1} &b>1 \end{cases}
\end{align}
and $\forall b>1$\vspace{-12pt}
\begin{align}
E_{\frac{bK}{m}}(\frac{1}{\bar{P}}+d_{2,m})=\begin{cases}\frac{1}{\frac{bK}{m}-1}&\forall
m\in\{1,...,K-1\}\\
0&m=K.
\end{cases}
\end{align}\vspace{-6pt}
Hence, from \eqref{E42} we obtain\vspace{-6pt}
\begin{align}\label{E62}
&\text{E[D]}=\frac{1}{K}\text{E}_{\Sigma}\left(n\tilde{\lambda}+\sum_{j=n+1}^{K}\sigma^2_{s_j}\right)\exp(-c)+\bar{P}^{-1}V_m,\;\;b>1,
\end{align}
where\vspace{-8pt}
\begin{align}\label{E79}
&V_m=\frac{1}{K}\text{E}_{\Sigma}\left[\sum_{m=1}^{K-1}\left\{m\!\sqrt[m]{\sigma_{s_1}^2...\sigma^2_{s_m}}\left((x_{1,m})^{-\frac{bK}{m}+1}-(x_{2,m})^{-\frac{bK}{m}+1}\right)\frac{1}{\frac{bK}{m}-1}\right.\right.\nonumber\\
&\left.\left.+\left(x_{2,m}-x_{1,m}\right)\sum_{i=m+1}^{K}\sigma^2_{s_i}\right\}+
K\!\sqrt[K]{\sigma_{s_1}^2...\sigma^2_{s_K}}(x_{1,K})^{-b+1}\frac{1}{b-1}
\right],
\end{align}
\begin{align}\label{E80}
x_{i,m}=\frac{\sqrt[bK]{\sigma^2_{s_1}...\sigma^2_{s_m}}}{(\sigma^2_{s_{m+i-1}})^{\frac{m}{bK}}},\;i=\{1,2\},
\end{align}\vspace{-10pt}
and for $b=1$, we have
\begin{align}\label{E78}
\text{E[D]}=\frac{1}{K}\text{E}_{\Sigma}\left(n\tilde{\lambda}+\sum_{j=n+1}^{K}\sigma^2_{s_j}\right)\exp(-c)+\bar{P}^{-1}W_m,\;\;b=1,
\end{align}
where\vspace{-8pt}
\begin{align}\label{E104}
&W_m=\frac{1}{K}\text{E}_{\Sigma}\left[\sum_{m=1}^{K-1}\left\{m\!\sqrt[m]{\sigma_{s_1}^2...\sigma^2_{s_m}}\left((x_{1,m})^{-\frac{bK}{m}+1}-(x_{2,m})^{-\frac{bK}{m}+1}\right)\frac{1}{\frac{bK}{m}-1}\right.\right.\nonumber\\
&\left.\left.+\left(x_{2,m}-x_{1,m}\right)\sum_{i=m+1}^{K}\sigma^2_{s_i}\right\}+
KE_1\left(\frac{x_{1,K}}{\bar{P}}\right)\sqrt[K]{\sigma_{s_1}^2...\sigma^2_{s_K}}
\right].
\end{align}
Using \eqref{E98}, it is evident that the second term may be
neglected with respect to the first term in both \eqref{E62} and
\eqref{E78} for $\bar{P}\rightarrow \infty$ and limited buffer size;
and therefore, we obtain\vspace{-6pt}
\begin{align}\label{E40}
&\text{E}[D]=\frac{1}{K}\text{E}_{\Sigma}\left[n\tilde{\lambda}+\sum_{j=n+1}^{K}{\sigma^2_{s_j}}\right]=
\frac{1}{K}\text{E}_{\Sigma}\left[n\sqrt[n]{\frac{\sigma_{s_1}^2...\sigma^2_{s_n}}{2^{KB_{\rm{max}}}}}+\sum_{j=n+1}^{K}{\sigma^2_{s_j}}\right],
\end{align}
where $n$ is an integer in $\{1,2,...,K\}$ such that
$\sigma^2_{s_{n+1}}<\sqrt[n]{\frac{\sigma_{s_1}^2...\sigma^2_{s_n}}{2^{KB_{\rm{max}}}}}\leq\sigma^2_{s_{n}}$.
Therefore, $\Delta_{\textit{MD}}^{B_{\rm{max}}}=0.$

We now consider the buffer unconstrained scenario. Obviously from
\eqref{E57}, we have $c\rightarrow\infty$. However, we still have
\eqref{E93} to \eqref{E104} and the first term in \eqref{E62} and
\eqref{E78} is zero. Therefore, using \eqref{E98} in \eqref{E62} and
\eqref{E78}, we derive
$\Delta_{\textit{MD}}=\underset{\bar{P}\rightarrow\infty}{\lim}-\frac{\ln{\text{E}[D]}}{\ln{\bar{P}}}=1.$
Thus, the proof is complete.
\end{proof}\vspace{-14pt}
\subsection{Constant Channel Rate with Constant Power} With CRCP scheme, the
fixed channel rate and adaptive source coding rates are optimized to
minimize the mean distortion when the power is constant. The
following three propositions express the optimized rate, mean
distortion and mean distortion exponent obtained by CRCP.

\begin{proposition}\label{P12}
For delay-limited transmission of a quasi-stationary source over a
block fading channel using the CRCP scheme and limited buffer, the
optimum rates $R^*$ and $R^*_{s_j},\;j\in\{1,...,K\}$ for minimum
mean distortion are obtained as follows,
\begin{align}
R^*=&\underset{R:\;0\leq bR\leq
B_{\rm{max}}}{\operatorname{argmin}}\!\!\!\text{E}[D]\!\!=\!\!\text{E}_s[\sigma_s^2]\text{Pr}\biggl(\!\alpha\!\!<\!\!\frac{2^{R}-1}{\bar{P}}\!\biggr)\!\!+\!\!
\biggl(1\!\!-\!\!\text{Pr}\biggl(\alpha<\frac{2^{R}-1}{\bar{P}}\biggr)\biggr)\!\text{E}_{\Sigma}\!\left[\!\!\frac{n\lambda(R)\!+\!\sum_{j=n+1}^{K}\sigma^2_{s_j}}{K}\!\!\right]\!,\label{E123}\\
R^*_{s_j}=&\left[\!\!\log_2\frac{\sigma_{s_j}^2}{\lambda(R^*)}\right]^+,\text{
where}\;\;\;\lambda(R)=\sqrt[n]{\frac{\sigma^2_{s_1}\times...\times\sigma^2_{s_n}}{2^{bKR}}},\label{E71}
\end{align}
$n$ is an integer in $\{1,...,K\}$ such that
$\sigma_{s_i}^2\geq{\lambda}$ for $i\leq n$ and
$\sigma^2_{s_i}<{\lambda}$ for $i>n;\;i\in\{1,...,K\}$.
\end{proposition}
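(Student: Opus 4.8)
The plan is to prove Proposition~\ref{P12} by the same three-step decomposition used for Proposition~\ref{P10} (COPACR), specialized to constant power $\gamma=\bar{P}$. First I would observe that with constant transmission power the instantaneous capacity is $C(\alpha)=\log_2(1+\alpha\bar{P})$, so the outage event $\{R>C(\alpha)\}$ reduces to $\{\alpha<(2^{R}-1)/\bar{P}\}$, whose probability no longer depends on any power-allocation variable. Starting from the mean-distortion expression \eqref{E22} (equivalently its rewritten form \eqref{E32}), I would substitute this fixed outage region, which immediately yields the outage probability $\mathrm{Pr}(\alpha<(2^{R}-1)/\bar{P})$ appearing in \eqref{E123}; this replaces the role played by $q_1^*(R)$ in the COPACR analysis.

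Second, with the outage probability fixed for a given $R$, minimizing \eqref{E22} over the source rates $R_{s_j}$ is exactly the separable source-rate-allocation subproblem already solved in Proposition~\ref{P7}: minimize $\mathrm{E}_{\Sigma}[\sum_{j=1}^{K}\sigma^2_{s_j}2^{-R_{s_j}}]$ subject to $\sum_{j=1}^{K}R_{s_j}=KbR$ and $R_{s_j}\geq0$. I would therefore invoke Proposition~\ref{P7} verbatim to obtain $R^*_{s_j}=[\log_2(\sigma_{s_j}^2/\lambda(R))]^+$ with $\lambda(R)=\sqrt[n]{\sigma^2_{s_1}\cdots\sigma^2_{s_n}/2^{bKR}}$ and the stated integer $n$, which gives \eqref{E71}. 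Substituting this reverse-water-filling solution back, the conditional (no-outage) distortion collapses to $\mathrm{E}_{\Sigma}[(n\lambda(R)+\sum_{j=n+1}^{K}\sigma^2_{s_j})/K]$, exactly the bracketed term in \eqref{E123}.

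Third, assembling these two pieces into \eqref{E32}, the mean distortion becomes a function of the single scalar variable $R$: the outage term weighted by $\mathrm{E}_s[\sigma_s^2]$ plus the no-outage term weighted by $(1-\mathrm{Pr}(\cdot))$, which is precisely the objective displayed in \eqref{E123}. The optimal fixed channel rate $R^*$ is then the minimizer of this one-dimensional function over the admissible interval $0\leq bR\leq B_{\rm{max}}$ imposed by the buffer constraint, and the associated source rates follow from \eqref{E71}.

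The main obstacle is chiefly expository rather than technical: the two subproblems have already been isolated and solved (Propositions~\ref{P6} and~\ref{P7}), so the real work is verifying that the constant-power specialization preserves the same separation structure—in particular, confirming that the outage probability genuinely decouples from the source-rate optimization so that the sequential minimization is valid. As in Proposition~\ref{P10}, I would note that $R^*$ is obtained by a numerical one-dimensional search; establishing that the objective in \eqref{E123} is convex in $R$ (ensuring a unique minimizer) would parallel the remark made for COPACR and is the only point requiring a brief additional argument.
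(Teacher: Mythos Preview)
Your proposal is correct and follows essentially the same route as the paper's own proof: fix the constant power so that $C(\alpha)=\log_2(1+\alpha\bar{P})$ and the outage region becomes $\{\alpha<(2^{R}-1)/\bar{P}\}$, invoke Proposition~\ref{P7} (reverse water-filling) to solve the source-rate subproblem for a given $R$, substitute back to reduce $\text{E}[D]$ to a function of the single scalar $R$, and then minimize numerically over $0\leq bR\leq B_{\rm{max}}$. The only cosmetic difference is that you phrase it as a three-step specialization of the COPACR argument (with the power-adaptation step trivialized), whereas the paper simply writes down \eqref{E122} directly and appeals to Problem~\ref{PR8} and Proposition~\ref{P7}; the paper also does not explicitly revisit the convexity remark for CRCP.
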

\begin{proof}
Since the power is constant, the channel capacity and the mean
distortion respectively are given as
$C(\alpha)=\log_2(1+\alpha\bar{P})$
and\vspace{-8pt}
\begin{align}\label{E122}
\text{E}[D]=\text{E}_s[\sigma_s^2]\text{Pr}\biggl(\alpha<\frac{2^{R}-1}{\bar{P}}\biggr)+
\text{E}_{\Sigma}\left[{\sum_{j=1}^{K}\sigma^2_{s_j}2^{-R_{s_j}}}\right]\biggl(1-\text{Pr}\biggl(\alpha<\frac{2^{R}-1}{\bar{P}}\biggr)\biggr).
\end{align}
For a given channel coding rate $R$, minimizing mean distortion in
\eqref{E122} is equivalent to minimizing
$\text{E}_{\Sigma}\left[{\sum_{j=1}^{K}\sigma^2_{s_j}2^{-R_{s_j}}}\right]$.
As noted, using reverse water filling, this leads us to Problem
\ref{PR8} and Proposition \ref{P7} which provides the source rate
allocation for a given $R$. Hence, the optimized value
of $R$ may be obtained numerically as indicated in \eqref{E123}.
Thus, the proof is complete.
\end{proof}

\begin{proposition}\label{P13}
For transmission of a quasi-stationary source over a Rayleigh block
fading channel, the CRCP scheme achieves the mean
distortion\vspace{-8pt}
\begin{align}\label{E74}
\text{E}[D]=\text{E}[\sigma^2]\left(1-\exp(\frac{2^{R^*}-1}{\bar{P}})\right)+\exp(\frac{2^{R^*}-1}{\bar{P}})\text{E}_{\Sigma}\left[\frac{n\lambda(R^*)+\sum_{j=n+1}^{K}\sigma^2_j}{K}\right]
\end{align}
with $n$, $R^*$ and $\lambda(R^*)$ in Proposition \ref{P12}, and the
mean distortion exponents $\Delta^{B_{\rm{max}}}_{\textit{MD}}=0$
and $\Delta_{\textit{MD}}=1-\tilde{r}_1$, where
$\tilde{r}_1=\frac{1}{b+1}$ denotes the CRCP buffer unconstrained
multiplexing gain.
\end{proposition}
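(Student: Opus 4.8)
The plan is to prove the two assertions separately, leaning on the rate characterization of Proposition~\ref{P12}. For the closed-form distortion \eqref{E74}, I would begin from the fixed-power expression \eqref{E122} and specialize to Rayleigh fading, where $\alpha$ is exponential with unit mean, so $\text{Pr}(\alpha<\frac{2^{R}-1}{\bar{P}})=1-\exp(-\frac{2^{R}-1}{\bar{P}})$ and thus $1-\text{Pr}(\cdot)=\exp(-\frac{2^{R}-1}{\bar{P}})$. Evaluating at $R=R^*$ and inserting the reverse water-filling rates \eqref{E71} of Proposition~\ref{P7}, the inner source sum $\text{E}_\Sigma[\sum_{j=1}^K\sigma^2_{s_j}2^{-R^*_{s_j}}]$ collapses to $\text{E}_\Sigma[n\lambda(R^*)+\sum_{j=n+1}^K\sigma^2_{s_j}]$, since each of the top $n$ blocks contributes $\sigma^2_{s_j}2^{-R^*_{s_j}}=\lambda(R^*)$ while the remaining $K-n$ blocks contribute their full variance. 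Assembling the outage and success pieces reproduces \eqref{E74}.

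For the buffer constrained exponent, the delay constraint $bR\le B_{\rm{max}}$ caps $R^*\le B_{\rm{max}}/b$, a bound independent of $\bar{P}$. Hence as $\bar{P}\to\infty$ the outage argument $\frac{2^{R^*}-1}{\bar{P}}\to0$, the factor $\exp(-\frac{2^{R^*}-1}{\bar{P}})\to1$, and by \eqref{E74} the distortion approaches the strictly positive floor $\text{E}_\Sigma[(n\lambda(R^*)+\sum_{j=n+1}^K\sigma^2_{s_j})/K]$ determined by $B_{\rm{max}}$. A constant limit forces $-\ln\text{E}[D]/\ln\bar{P}\to0$, so $\Delta^{B_{\rm{max}}}_{\textit{MD}}=0$.

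The buffer unconstrained exponent is the crux and is where I would spend the effort. Letting $B_{\rm{max}}\to\infty$ removes the rate cap; for large $R$ the water-filling level drops below every fixed variance, so $n=K$ and $\lambda(R)=(\prod_{j=1}^K\sigma^2_{s_j})^{1/K}2^{-bR}$, making the success term of \eqref{E74} scale as $\text{E}_\Sigma[(\prod_{j=1}^K\sigma^2_{s_j})^{1/K}]\,2^{-bR}$. Writing the optimizer as $R^*=\tilde{r}_1\log_2\bar{P}+\tilde{r}_0$ and using $1-\exp(-x)\cong x$ from \eqref{E98}, the two terms of \eqref{E74} decay as $\bar{P}^{-(1-\tilde{r}_1)}$ (outage) and $\bar{P}^{-b\tilde{r}_1}$ (quantization). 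Because $\text{E}[D]$ is a sum of these positive terms, its decay exponent equals $\min\{1-\tilde{r}_1,\,b\tilde{r}_1\}$, and asymptotically minimizing $\text{E}[D]$ is the same as maximizing this minimum over $\tilde{r}_1\in[0,1)$.

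The closing step is this scalar optimization: $1-\tilde{r}_1$ decreases while $b\tilde{r}_1$ increases in $\tilde{r}_1$, so their minimum is maximized at the crossing $1-\tilde{r}_1=b\tilde{r}_1$, giving $\tilde{r}_1=\frac{1}{b+1}$ and common value $\frac{b}{b+1}=1-\tilde{r}_1$; hence $\Delta_{\textit{MD}}=1-\tilde{r}_1=\frac{b}{b+1}$. I expect the main obstacle to be rigorously justifying that the true optimizer $R^*$ has exactly this affine-in-$\log_2\bar{P}$ form with slope $\frac{1}{b+1}$: one must verify that any other multiplexing gain strictly lowers the exponent (one term then decays too slowly), and that the bounded offset $\tilde{r}_0$ together with the finite threshold in $\bar{P}$ beyond which $n$ saturates to $K$ do not perturb the leading exponent.
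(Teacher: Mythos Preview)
Your proposal is correct and follows essentially the same route as the paper's proof in Appendix~\ref{A5}: specialize \eqref{E122} to Rayleigh fading to obtain \eqref{E74}, observe that the buffer cap forces a positive distortion floor so $\Delta^{B_{\rm{max}}}_{\textit{MD}}=0$, and in the unconstrained case set $n=K$, parametrize $R^*=\tilde r_1\log_2\bar P+\tilde r_0$, and balance the outage exponent $1-\tilde r_1$ against the quantization exponent $b\tilde r_1$ to get $\tilde r_1=\frac{1}{b+1}$. The obstacle you flag---rigorously establishing the affine form of $R^*$---is handled only heuristically in the paper as well (via numerical evidence in Fig.~\ref{F16} and the max-min argument you reproduce), so your caveat is well placed.
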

\begin{proof}
The proof is provided in Appendix \ref{A5}.
\end{proof}

Note that although the approach used in the proof of Proposition
\ref{P13} is similar to that in \cite{R27}, the source considered in
\cite{R27} is stationary.

\section{Analytical Performance Comparison}\label{SVIB} In the
sequel, we quantify the respective asymptotic mean power gains
$G^{B_{\rm{max}}}_{\textit{MD}}$ and $G_{\textit{MD}}$ of SCORPA,
COPACR, SCORACP and CRCP for transmission of a quasi-stationary
source over a block fading channel. As observed from equations
\eqref{E4}, \eqref{E38}, \eqref{E40} and \eqref{E75} for large power
and with buffer constraints, all schemes provide the same mean
distortion independent of the power constraint. Since
$\Delta^{B_{\rm{max}}}_{\textit{MD}}=0$, there is no meaningful
$G^{B_{\rm{max}}}_{M\!D}$. Thus in the following we only address
$G_{\textit{MD}}$ and $\Delta_{\textit{MD}}$.

\begin{proposition}\label{P9}
In transmission of a quasi-stationary source over a Rayleigh block
fading channel, the asymptotic mean power gains obtained by SCORPA
with respect to COPACR, $G_{M\!D,1}$, COPACR with respect to
SCORACP, $G_{M\!D,2}$, and CRCP with respect to SCORACP,
$G_{M\!D,3}$, are given by\vspace{-4pt}
\begin{equation}\label{E68}
G_{M\!D,1}=\frac{10}{b}\log_{10}{\frac{\text{E}_{\Sigma}[\sqrt[K]{\sigma_{s_1}^2\times...\times\sigma_{s_K}^2}]\bar{P}_2^{b(1-r_1)}2^{-br_0}}
{(\Gamma(\frac{1}{b+1},0))^{b+1}(\text{E}_{\Sigma}[\sqrt[K+bK]{\sigma_{s_1}^2\times...\times\sigma_{s_K}^2}])^{b+1}}},
\end{equation}\vspace{-8pt}
\begin{align}\label{E69}
G_{M\!D,2}=\begin{cases}\frac{10}{r_1}\log_{10}\frac{W_m}{\text{E}_{\Sigma}[\sqrt[K]{\sigma_{s_1}^2\times...\times\sigma_{s_K}^2}]\bar{P}_2^{1-r_1}2^{-r_0}}&b=1\\
\frac{10}{br_1}\log_{10}{\frac{V_m}{\text{E}_{\Sigma}[\sqrt[K]{\sigma_{s_1}^2\times...\times\sigma_{s_K}^2}]\bar{P}_2^{1-br_1}2^{-br_0}}}&b>1\end{cases}
\end{align}
\vspace{-4pt}and\vspace{-8pt}
\begin{align}\label{E70}
G_{M\!D,3}=\begin{cases}20\log_{10}\!{\frac{W_m\bar{P}_2^{\frac{-1}{2}}}{\left(2^{-\tilde{r}_0}\text{E}_{\Sigma}\left[\sqrt[K]{\sigma^2_{s_1}\times...\times\sigma^2_{s_K}}\right]+2^{\tilde{r}_0}\text{E}_s[\sigma_s^2]\right)}},&b=1\\
\frac{b+1}{b}10\log_{10}\!{\frac{V_m\bar{P}_2^{\frac{-1}{b+1}}}{\left(2^{-b\tilde{r}_0}\text{E}_{\Sigma}\left[\sqrt[K]{\sigma^2_{s_1}\times...\times\sigma^2_{s_K}}\right]+2^{\tilde{r}_0}\text{E}_s[\sigma_s^2]\right)}},&b>1,\end{cases}
\end{align}
\end{proposition}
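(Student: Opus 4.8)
The plan is to reduce all four schemes, in the buffer-unconstrained large-power regime, to a common asymptotic form $\text{E}[D]\approx A\,\bar{P}^{-\Delta}$ and then invoke the definition \eqref{E84}. For SCORPA this form is already available from \eqref{E55}, with exponent $\Delta=b$ and prefactor $A_{\text{SCORPA}}=\bigl(\Gamma(\tfrac{1}{b+1},0)\bigr)^{b+1}\bigl(\text{E}_{\Sigma}[\sqrt[K+bK]{\sigma_{s_1}^2\cdots\sigma_{s_K}^2}]\bigr)^{b+1}$. For COPACR I would substitute the multiplexing-gain parametrization $R^*=r_1\log_2\bar{P}+r_0$ of Proposition \ref{P15} into the distortion \eqref{E81}: as $\bar{P}\to\infty$ the outage term vanishes, $n\to K$, and $\lambda(R^*)\to\sqrt[K]{\sigma_{s_1}^2\cdots\sigma_{s_K}^2}\,2^{-bR^*}$, yielding $A_{\text{COPACR}}=\text{E}_{\Sigma}[\sqrt[K]{\sigma_{s_1}^2\cdots\sigma_{s_K}^2}]\,2^{-br_0}$ with exponent $br_1$. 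For SCORACP the asymptotic forms \eqref{E62} and \eqref{E78} give $\Delta=1$ directly, with $A_{\text{SCORACP}}=V_m$ from \eqref{E79} when $b>1$ and $A_{\text{SCORACP}}=W_m$ from \eqref{E104} when $b=1$. For CRCP I would insert $\tilde r_1=\tfrac{1}{b+1}$ and $R^*=\tilde r_1\log_2\bar{P}+\tilde r_0$ into \eqref{E74}; here both the outage term and the quantization term scale as $\bar{P}^{-b/(b+1)}$, so they add and give $A_{\text{CRCP}}=2^{\tilde r_0}\text{E}_s[\sigma_s^2]+2^{-b\tilde r_0}\text{E}_{\Sigma}[\sqrt[K]{\sigma_{s_1}^2\cdots\sigma_{s_K}^2}]$ with exponent $1-\tilde r_1=\tfrac{b}{b+1}$.

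With these four pairs $(A,\Delta)$ in hand, each gain follows from a single elimination step, where for $G_{M\!D,i}$ the first-named scheme is assigned power $\bar{P}_1$ and the second power $\bar{P}_2$. Requiring both schemes to achieve the same target distortion, I set $A_1\bar{P}_1^{-\Delta_1}=A_2\bar{P}_2^{-\Delta_2}$, solve for $\bar{P}_1$ in terms of $\bar{P}_2$, and substitute into $G=10\log_{10}\bar{P}_2-10\log_{10}\bar{P}_1$. For $G_{M\!D,1}$ (SCORPA versus COPACR) this gives $\bar{P}_1=(A_{\text{SCORPA}}/A_{\text{COPACR}})^{1/b}\bar{P}_2^{r_1}$, whence $G_{M\!D,1}=\tfrac{10}{b}\log_{10}\bigl(A_{\text{COPACR}}\bar{P}_2^{b(1-r_1)}/A_{\text{SCORPA}}\bigr)$, which is exactly \eqref{E68}. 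For $G_{M\!D,2}$ (COPACR versus SCORACP) the same elimination with exponents $br_1$ and $1$ produces the prefactor $\tfrac{10}{br_1}$ and the ratio in \eqref{E69}, the two listed cases arising only from $A_{\text{SCORACP}}\in\{W_m,V_m\}$. For $G_{M\!D,3}$ (CRCP versus SCORACP) the exponents are $\tfrac{b}{b+1}$ and $1$, so the elimination multiplier is $\tfrac{1}{1-\tilde r_1}=\tfrac{b+1}{b}$ and the two-term CRCP prefactor appears in the denominator, reproducing \eqref{E70}.

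The routine parts are the small-argument expansions $1-e^{-x}\approx x$ and $e^{-x}\approx1$ from \eqref{E98} used to extract the prefactors, and the logarithmic bookkeeping converting a power ratio into dB. The step I expect to be the main obstacle is establishing the COPACR and CRCP prefactors rigorously, since both require the asymptotic multiplexing-gain parametrizations $R^*=r_1\log_2\bar{P}+r_0$ and $R^*=\tilde r_1\log_2\bar{P}+\tilde r_0$ developed in Appendices~\ref{A3} and~\ref{A5}; in particular one must verify for CRCP that the outage contribution does \emph{not} become negligible but instead matches the quantization term at order $\bar{P}^{-b/(b+1)}$, so that $A_{\text{CRCP}}$ is genuinely the \emph{sum} of the two terms rather than either one alone. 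Care is also needed to track, in each comparison, which scheme carries $\bar{P}_1$ and which carries $\bar{P}_2$, since the exponents differ across schemes and swapping the roles changes both the sign and the algebraic form of the resulting gain.
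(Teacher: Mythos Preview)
Your proposal is correct and follows essentially the same route as the paper's proof in Appendix~\ref{A1}: the paper also equates the asymptotic distortions of each pair of schemes---using precisely \eqref{E55} and \eqref{E66} for $G_{M\!D,1}$, and the SCORACP expressions \eqref{E62}/\eqref{E78} together with \eqref{E66} and the CRCP form derived in Appendix~\ref{A5} for $G_{M\!D,2}$ and $G_{M\!D,3}$---and then solves for $\bar{P}_1$ in terms of $\bar{P}_2$ before substituting into the definition \eqref{E84}. Your systematic reduction of each scheme to a pair $(A,\Delta)$ is a tidy organization of exactly this argument, and your cautionary remarks about the COPACR/CRCP prefactors and the two-term CRCP sum are precisely the points handled in Appendices~\ref{A3} and~\ref{A5}.
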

where $r_0$ and $r_1$ are given in Proposition \ref{P15},
$\tilde{r}_0$ is defined in Proposition \ref{P13}, $V_m$ and $W_m$
are respectively given in \eqref{E79} and \eqref{E104}; and
$\bar{P}_2$ is the power limit in scheme 2 (see Table \ref{T3} or
\ref{T4}).

\begin{proof}
The proof is provided in Appendix \ref{A1}.
\end{proof}

For large power constraint $\bar{P}_2$, \eqref{E68}-\eqref{E70} is
simplified as follows\vspace{-4pt}
\begin{align}\label{E200}
G_{M\!D,1}=(1-r_1)10\log_{10}{\bar{P}_2},\;\;\;\;G_{M\!D,2}=\frac{(br_1-1)}{br_1}10\log_{10}
{\bar{P}_2},\;\;\;\;G_{M\!D,3}=\frac{-1}{b}10\log_{10}{\bar{P}_2}.
\end{align}
As evident, $G_{\textit{MD}}$ depends on the power constraint and
bandwidth expansion ratio. One sees that both $G_{M\!D,1}$ and,
$G_{M\!D,2}$ for $br_1>1$ are increasing functions of the power. On
the other hand $G_{M\!D,3}$ is negative and is a decreasing function
of the power.

Tables \ref{T3} and \ref{T4} present the value of $G_{M\!D}$ for two
relatively large values of ${\bar{P}_2}$. One sees that, an adaptive
power scheme with constant channel rate performs better than an
adaptive channel rate scheme with constant power from mean
distortion perspective if $br_1>1$. As such, if we wish to control
only one parameter (power or rate) for efficient transmission in the
presence of source and channel variations, adapting power leads to a
superior mean distortion performance. Nonetheless, adapting rate
still provides performance gain.

The mean distortion exponent $\Delta_{\textit{MD}}$ of the presented
schemes which are derived in line with the proofs of the
Propositions \ref{P15}, \ref{P8}, \ref{P16} and \ref{P13} are
expressed in Table \ref{T5}. The mean distortion exponent indicates
the speed at which the mean distortion ($\rm dB$) reduces as the
average power (limit) ($\rm dB$) increases. Therefore, as evident,
this speed improves as bandwidth expansion ratio $b$ increases with
SCORPA, COPACR and CRCP schemes, while it is fixed to 1 with
SCORACP. Furthermore, the $\Delta_{\textit{MD}}$ obtained by all
proposed schemes is independent of the average power limit
$\bar{P}$. It is noteworthy that for $b=1$, $\Delta_{\textit{MD}}$
of COPACR is lower than that of SCORACP, while this is reverse for
$br_1>1$, where $r_1$ denotes the COPACR buffer unconstrained
multiplexing gain. Thus, it is evident that for large power
constraint and $br_1>1$ with buffer unconstrained scenario, the
adaptive power scheme with optimized fixed channel rate outperforms
the adaptive rate with fixed power design scheme from mean
distortion exponent aspect. The results in Tables \ref{T3} to
\ref{T5} indicate that from the perspective of mean distortion, for
delay-limited communication of quasi-stationary sources, CRCP scheme
may not be an appropriate design.
\begin{table}[!t]
\caption{{\small{Asymptotic mean power gain $G_{M\!D}$ of scheme 1
with respect to scheme 2 for source U defined in Section \ref{SVI},
unlimited buffer size, $K=2$ and $\bar{P}_2=40\rm dB$, where
$\bar{P}_2$ is the power limit for the scheme 2. } }} \label{T3}
\centering
\begin{tabular}{|c|c|c|c|c|c|c|}
\hline $G_{M\!D}$&Scheme 1&Scheme 2& $b=1$& $b=2 $& $b=4$& $b=6$\\
\hline $G_{M\!D,1}$&SCORPA&COPACR&$5.27$&$5.74$&$6.02$&$6.10$\\
\hline $G_{M\!D,2}$&
COPACR&SCORACP&$-0.008$&$9.78$&$17.46 $&$19.93$\\
\hline $G_{M\!D,3}$&
CRCP&SCORACP&$ -28.34$&$ -24.97$&$-19.45$&$-17.67$\\
\hline
\end{tabular}
\end{table}
\begin{table}[!t]
\caption{\small{Asymptotic mean power gain $G_{M\!D}$ of scheme 1
with respect to scheme 2 for source U defined in Section \ref{SVI},
unlimited buffer size, $K=2$ and $\bar{P}_2=45 \rm dB$, where
$\bar{P}_2$ is the power limit for the scheme 2.} } \label{T4}
\centering
\begin{tabular}{|c|c|c|c|c|c|c|}
\hline $G_{M\!D}$&Scheme 1&Scheme 2& $b=1$& $b=2 $& $b=4$& $b=6$\\
\hline $G_{M\!D,1}$&SCORPA&COPACR&$5.84$&$6.39$&$6.76$&$6.87$\\
\hline $G_{M\!D,2}$&
COPACR&SCORACP&$-0.05$&$ 11.91$&$21.00 $&$ 23.95$\\
\hline $G_{M\!D,3}$&
CRCP&SCORACP&$-32.27$&$-27.47$&$-20.71$&$-18.50$\\
\hline
\end{tabular}
\end{table}
\begin{table*}[t]
\caption{\small{Mean distortion exponent $\Delta_{\textit{MD}}$ of
the proposed schemes for source U.}}\label{T5} \centering
\begin{tabular}{|c||c|c|c|c|}
\hline Scheme   &SCORPA& COPACR& SCORACP& CRCP\\
\hline $\Delta_{\textit{MD}}$&$b$   &$br_1$  &1       &$\frac{b}{b+1}$\\
\hline
\end{tabular}
\end{table*}
\section{Performance evaluation}\label{SVI}
In this section, the performance of the proposed schemes are
compared through numerical computations and evaluated for different
source variances and frame sizes. To this end, we consider Rayleigh
fading channel and two quasi-stationary sources with $N_s=9$ where
the variance of the source in the state $s$ is
$\sigma^2_s(s)=(1+(s-1)^2):\forall{s}\in\{1,...,N_S\}$. For the
first source, labeled as U, the probability of being in different
states is considered uniform which results in
$\text{E}_s[\sigma_s^2]=23.66$. For the second source, G, the said
distribution follows a discrete Gaussian distribution with a mean
$5.49$ and a variance $2.52$ so as to have
$\text{E}_s[\sigma_s^2]=23.66$. We also consider a stationary source
D with $\sigma^2_s=23.66$, which is obviously equal to
$\text{E}_s[\sigma_s^2]$, same as expected variances of U and G. For
fair comparisons, we assume that $\tilde{B}_{\rm{max}}$ increases
linearly with $K$ such that $B_{\rm{max}}$ is independent of $K$.
%
%


Fig. \ref{F3} depicts reconstructed SNR (RSNR) performance of the
presented schemes defined as
$10\log_{10}\frac{\text{E}_s[\sigma^2_s]}{\text{E}[D]}$ with respect
to the power constraint $\bar{P}$ for different values of $K$. As
observed, RSNR improves as the source changes faster. In fact, the
delay due to buffering of the source blocks in a frame allows us to
use source diversity. However, the speed of this improvement
decreases as $K$ increases. Our simulations for the quasi-stationary
source U indicate that buffering of more than $K=4$ blocks does not
provide additional performance improvements.
\begin{figure}[t!]
     \centering
     \subfloat[]{\includegraphics[width=3.1in]{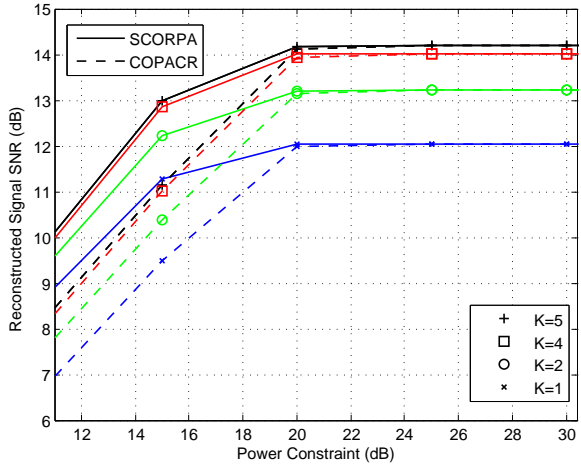}}
     \quad
     \subfloat[]{\includegraphics[width=3.1in]{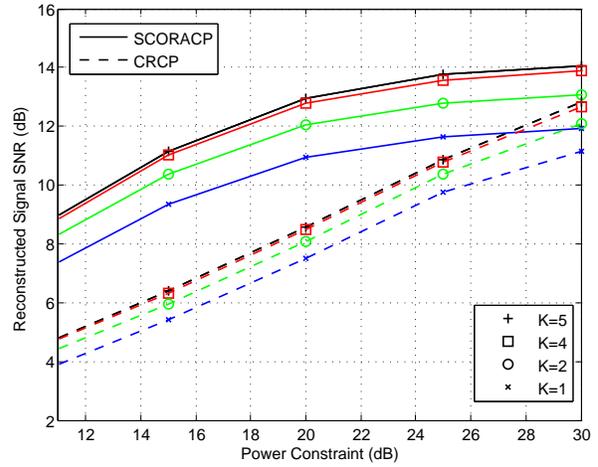}}
     \caption{\small{RSNR of (a) SCORPA and COPACR schemes (b) SCORACP and CRCP schemes, versus power constraint for different $K$, $B_{\rm{max}}=4$ and $b=1$.}}\label{F3}
     \label{steady_state}
\end{figure}

Figs. \ref{F5} and \ref{F8} demonstrate the RSNR performance of the
proposed schemes for bandwidth expansion ratios $b=1$ and $b=2$. As
noted in Section \ref{SVIB}, for $b=1$ and large $B_{\rm{max}}$,
here $B_{\rm{max}}=20$, SCORACP outperforms COPACR. This is while
for $b>1$ or limited $B_{\rm{max}}$, COPACR outperforms SCORACP for
large enough power constraint. A larger bandwidth expansion ratio,
$b$, corresponds to larger number of channel uses per source sample
and hence, as the results confirm, leads to improved RSNR
performance.
\begin{figure}[h!]
\centering
\subfloat[]{\includegraphics[width=3.1in]{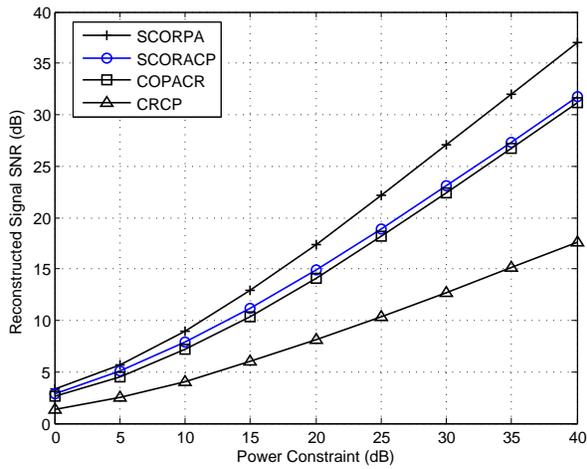}}\quad
\subfloat[]{\includegraphics[width=3.1in]{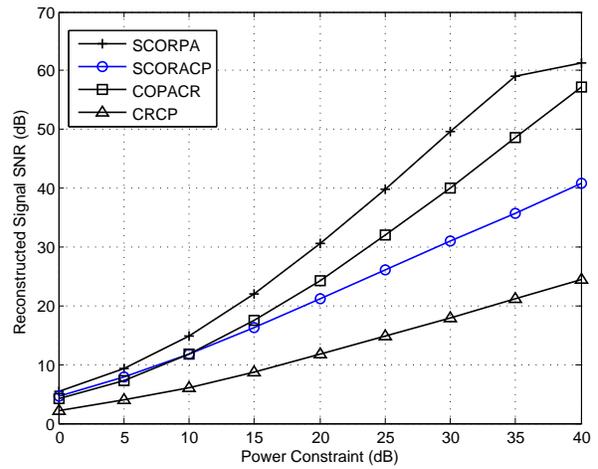}}
 \caption{\small{RSNR versus power constraint for $K=2$,
$B_{\rm{max}}=20$ and (a) $b=1$ (b) $b=2$.}} \label{F5}
\end{figure}

As observed in Fig. \ref{F5}, the proposed SCORPA scheme achieves an
asymptotic mean power gain of about $\rm5.81\, dB$ and $\rm5.91\,
dB$ with respect to COPACR, for $\bar{P}=\rm 40\, dB$ and $b=1$ and
$b=2$, respectively. In the same settings, the COPACR scheme
achieves asymptotic mean power gains of about $\rm\,-0.60 dB$ and
$9.39\rm\, dB$ with respect to SCORACP; and CRCP achieves gains of
$\rm\,-28.7 dB$ and $\rm -25.22\, dB$ with respect to SCORACP. Note
that $\bar{P}$ is the power limit for the second scheme in each
comparison. The results obtained from simulations and what is
reported in Table \ref{T3} from analyses match reasonably well given
the assumption of high average SNR considered in the analytical
performance evaluations.
\begin{figure}[h!]
\centering
 \subfloat[]{\includegraphics[width=3.1in]{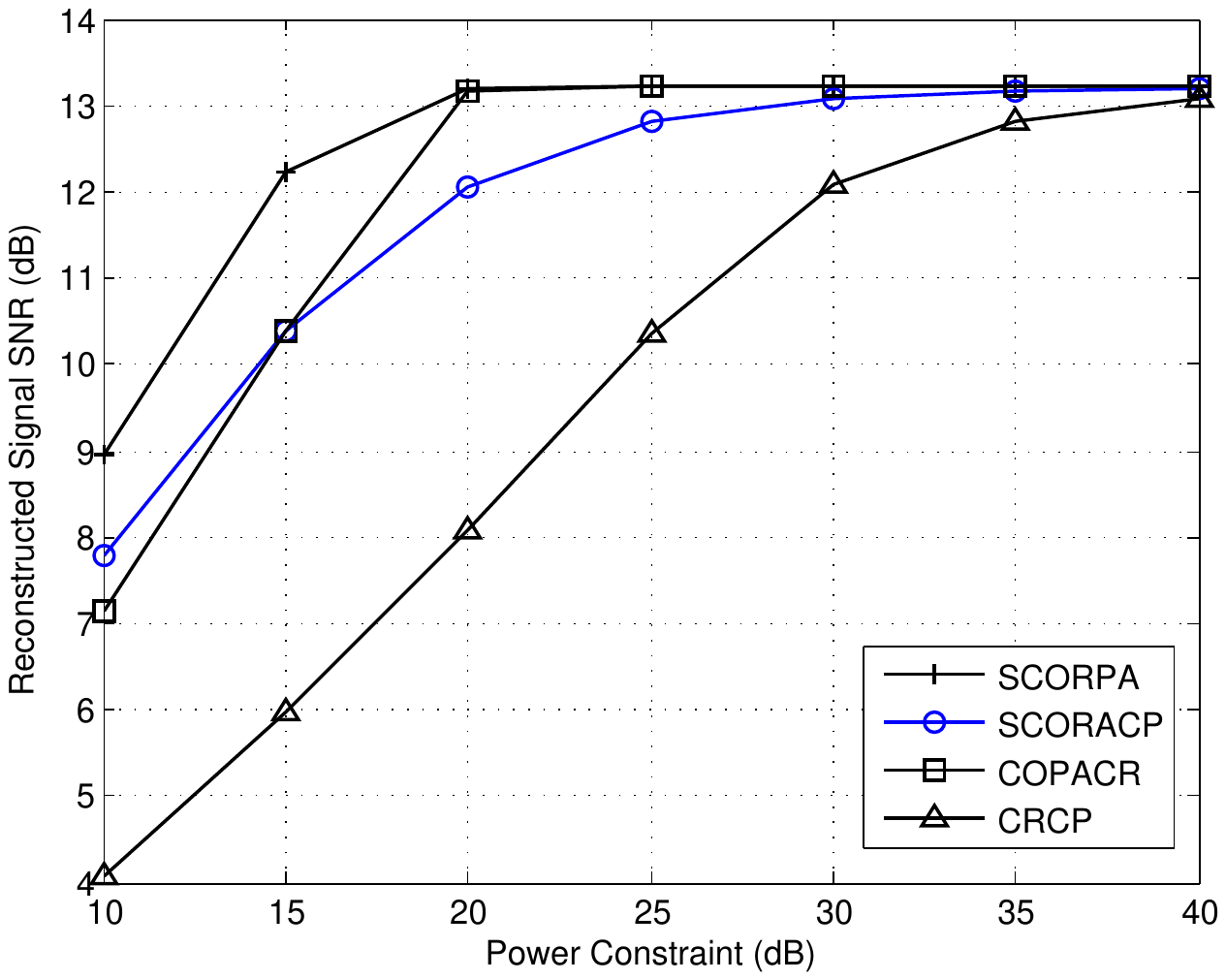}}\quad
\subfloat[]{\includegraphics[width=3.1in]{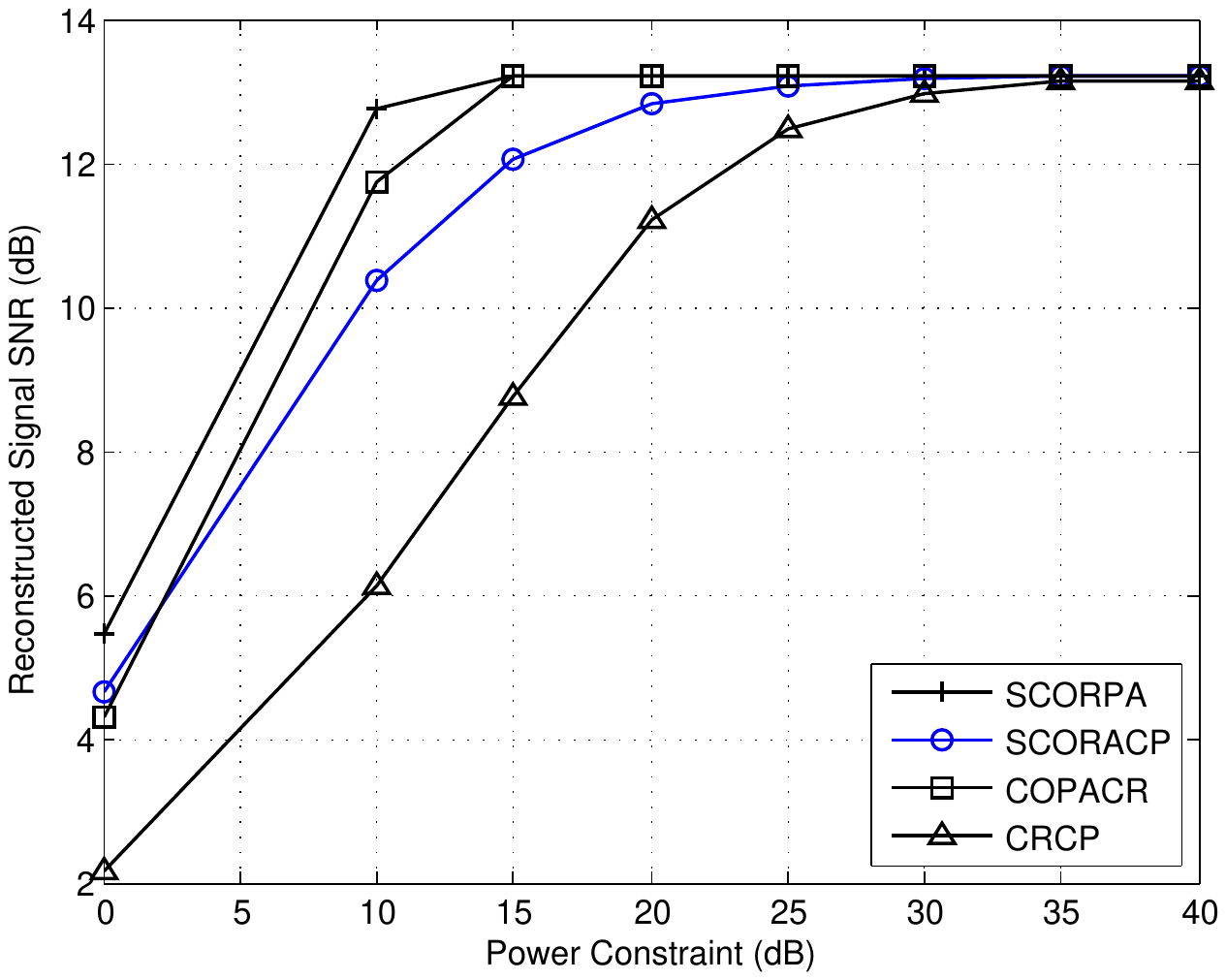}}
 \caption{\small{RSNR versus power constraint for $K=2$,
$B_{\rm{max}}=4$ and (a) $b=1$ (b) $b=2$.}} \label{F8}
\end{figure}

Figs. \ref{F5} and \ref{F8} also demonstrate the effect of
$B_{\rm{max}}$. As observed, a given frame (buffer) size,
$B_{\rm{max}}$, imposes a certain RSNR cap on the performance. As
power limit, $\bar{P}$, increases, the RSNR improves until it
saturates at this RSNR cap and any further increase of power will
not help with RSNR performance. This confirms the substantial impact
of buffer size on the performance. In the unsaturated regime, the
performance and the speed by which it improves with respect to the
power naturally depends on the system parameters and the rate and
power allocation strategy. As evident in \eqref{E4}, \eqref{E38},
\eqref{E40} and \eqref{E75}, the value of the RSNR cap depends on
the source statistics, $K$, and $B_{\rm{max}}$ and is independent of
the rate and power allocation strategy.

Fig. \ref{F9} demonstrates the RSNR performance of the presented
schemes for different sources. The results show that a larger source
diversity may be exploited as the non-stationary characteristics of
the source intensifies (from source D to U), and therefore RNR
increases in general. This is not only evident in the RSNR level at
which the performance statures (for large power constraints), it is
also visible in the unsaturated RSNR regime (low to medium
transmission power). Moreover, one sees that the relative
performance gains explained from using the proposed rate and/or
power adaptation strategies are attainable in the unsaturated regime
disregarding the source characteristics.
\begin{figure}[h!]
\centering
\subfloat[]{\includegraphics[width=3.1in]{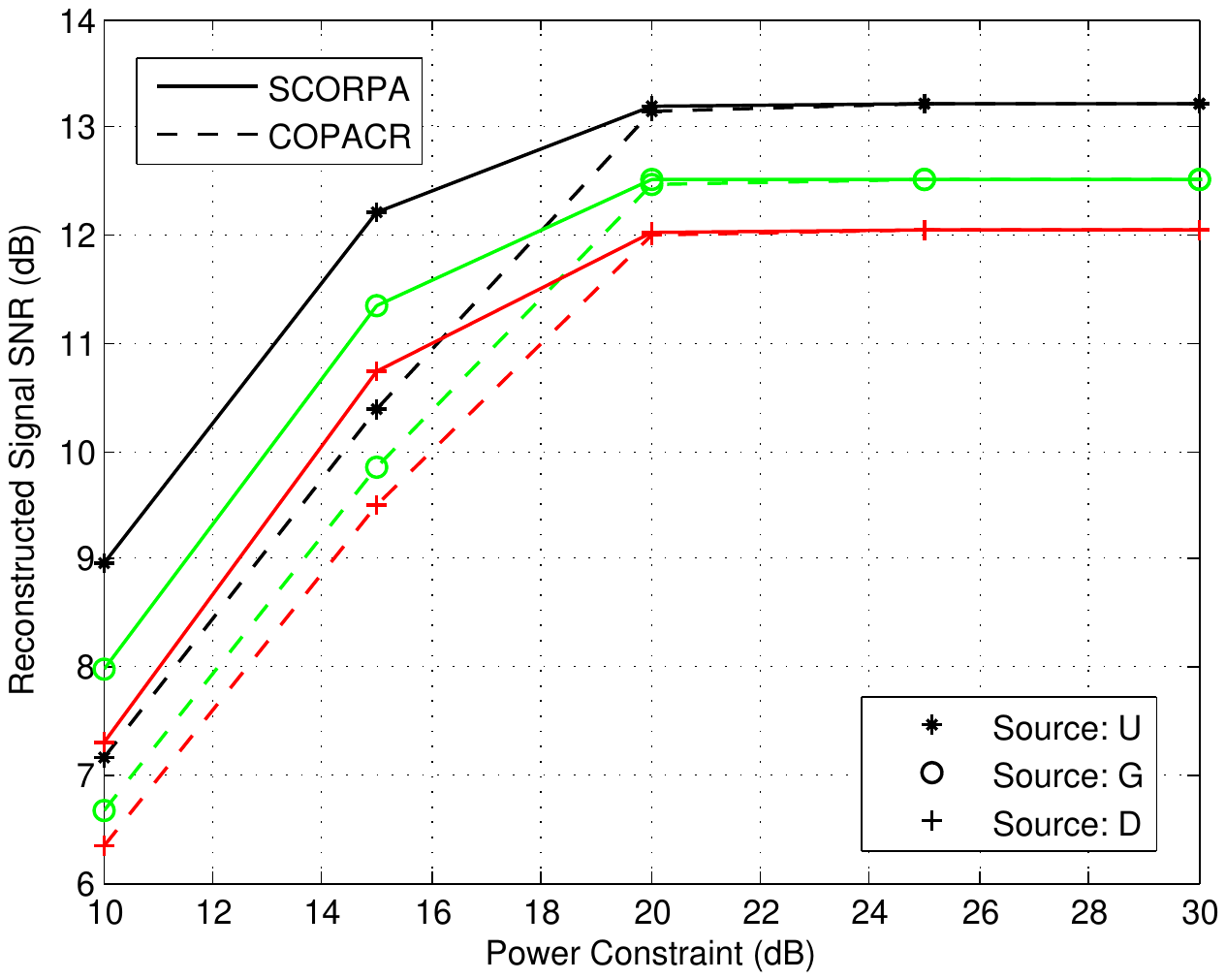}}\quad
\subfloat[]{\includegraphics[width=3.1in]{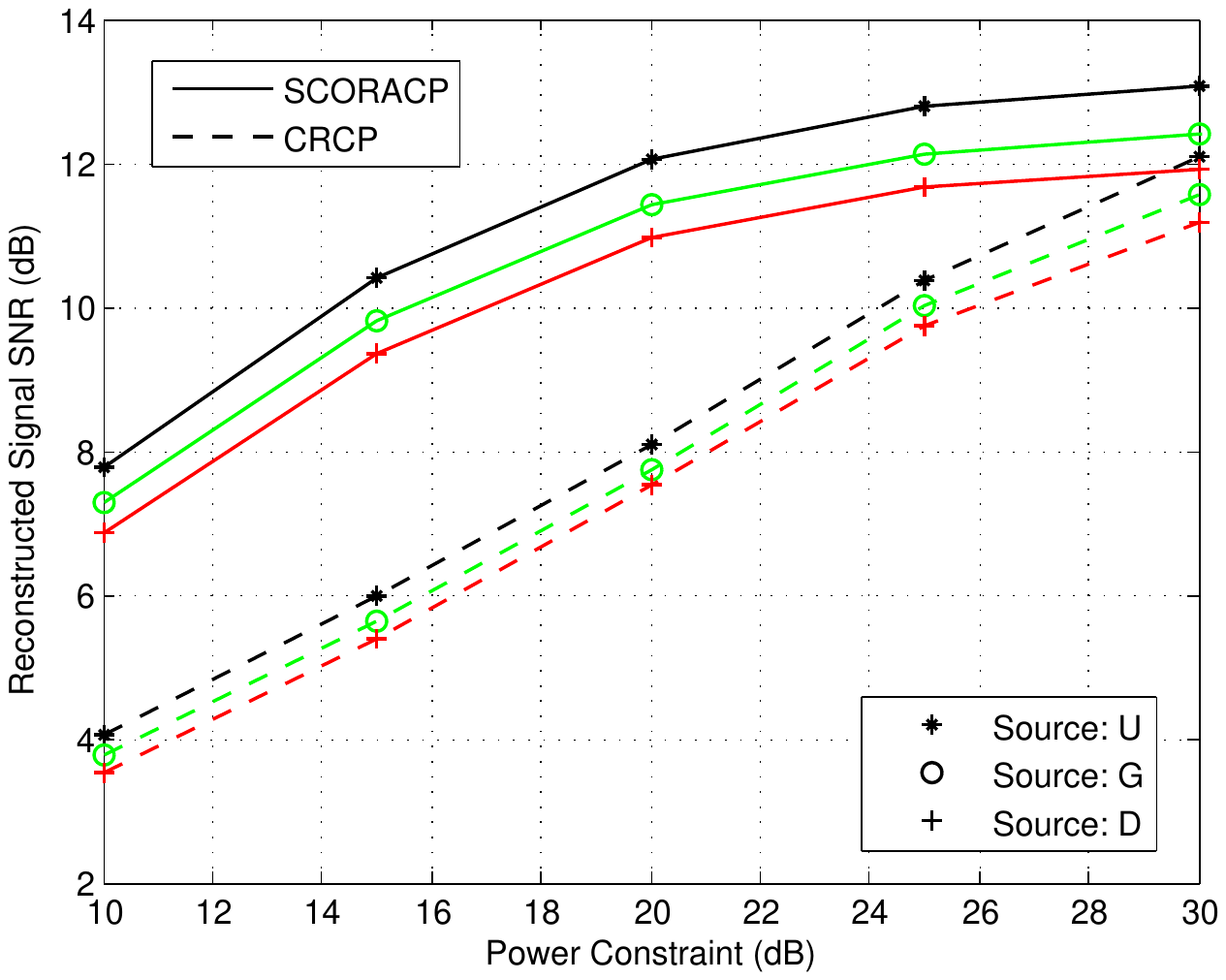}}
\caption{\small{RSNR of (a) SCORPA and COPACR schemes (b) SCORACP
and CRCP schemes, versus power constraint for three different
sources, $K=2$, $B_{\rm{max}}=4$ and $b=1$.}} \label{F9}
\end{figure}

\section{Conclusions}\label{SVII}
In this paper, we have considered delay-distortion-power trade-offs
in transmission of a quasi-stationary source over a block fading
channel when the buffer size is limited. Aiming at minimizing the
mean distortion, we have introduced two optimized power transmission
strategies as well as two other design schemes with constant
transmission power. In the high SNR regime, we have derived
different scaling laws involving mean distortion exponent and
asymptotic mean power gain. The analyses of the presented schemes
indicate that the buffer limit particularly affects the performance
as the average transmission power increases; and as such needs to be
carefully taken into account in the design. The proposed schemes
with buffering exploit the diversity gain due to non-stationary
characteristics of the source and channel variations to different
levels. Our studies confirm the benefit of power adaption along with
rate adaptation from a mean distortion perspective and for
delay-limited transmission of quasi-stationary sources with limited
buffer over wireless block fading channels.

Future research in this direction could investigate the potential
dependency of different source blocks in the design. Also, it is
interesting to model the characteristics of practical multimedia
coding standards within the proposed framework and hence quantify
the potential achievable performance gains. From a theoretical
perspective, one could also consider a multiuser setting and explore
design paradigms exploiting source and channel variations
(diversity) in a multiuser setting.

\appendices
\section{Proof of Proposition \ref{P1}}\label{A2}
Considering the power and buffer size constraints and equation
\eqref{E45}, it is necessary to set the power such that
$bC(\alpha,\gamma)\leq B_{\rm{max}}$. Thus, we may use the following
constraint\vspace{-6pt}
\begin{align}\label{E30}
bC(\alpha,\gamma)\leq B_{\rm{max}}
\end{align}
instead of the third constraint in Problem \ref{PR2}.

Due to the fact that in Problem \ref{PR2}, the objective and the
constraints are convex functions of $\gamma$ and $R_{s_j}$, we take
a Lagrange optimization approach. Hence, using
$C(\alpha,\gamma)=\log_2(1+\alpha\gamma)$, Problem \ref{PR2} may be
restated as follows\vspace{-6pt}
\begin{align}\label{E12}
\underset{\gamma,{R}_{s_j}}{\operatorname{min}}\;\biggl\{\text{E}_{\Sigma,\alpha}\left[\sum_{j=1}^{K}\sigma^2_{s_j}2^{-R_{s_j}}\right]+\lambda
E[\gamma]+\lambda_0
b\log_2(1+\alpha\gamma)+\lambda_1\left(\sum_{j=1}^{K}{R_{s_j}}-Kb\log_2(1+\alpha\gamma)\right)\biggr\}.
\end{align}
Differentiating \eqref{E12} with respect to $R_{s_j}$ and $\gamma$,
setting them to zero and noting the fact that $R_{s_j}$ and $\gamma$
are to be nonnegative, we obtain\vspace{-6pt}
\begin{align}\label{E3}
R_{s_j}=\left[\log_2{\frac{\sigma^2_{s_j}}{\lambda_2}}\right]^+\;\;\;\forall
j\in\{1,...,K\}
\end{align}
\begin{align}\label{E23}
\gamma=\left[\frac{\frac{\alpha\lambda_3bK}{\lambda}-1}{\alpha}\right]^+,
\end{align}\vspace{-4pt}
where $\lambda_2=\frac{\lambda_1}{\ln2}$ and
$\lambda_3=\lambda_2-\frac{\lambda_0}{K\ln2}$.

Here, we continue to solve the problem in different cases when the
constraint \eqref{E30} is active or inactive. Obviously, when the
constraint is inactive, i.e.,\vspace{-4pt}
\begin{align}\label{E15}
bC(\alpha,\gamma)< B_{\rm{max}}
\end{align}
and $\lambda_0=0$, we have $\lambda_3=\lambda_2$. Alternatively,
noting \eqref{E23} and \eqref{E15}, the constraint \eqref{E30} is
inactive if
\begin{align}\label{E16}
\frac{\alpha\lambda_2bK}{\lambda}<2^{\frac{B_{\rm{max}}}{b}}.
\end{align}
From
 \eqref{E23} it is seen that $\gamma=0$ if
$\frac{\alpha\lambda_2bK}{\lambda}\leq 1$ or equivalently,
$\lambda_2\leq\frac{\lambda}{\alpha bK}.$
Considering the forth constraint in Problem \ref{PR2}, $R_{s_j}=0$
for $\gamma=0$. Noting \eqref{E3}, $R_{s_j}=0,
\forall{j}\in\{1,...K\}$ if ${\lambda_2}\geq\sigma^2_{s_1}$.
Hence, the power and the rate are set to zero if $\frac{\lambda}{
bK\sigma^2_{s_1}}\geq\alpha$. Now assume
\begin{align}\label{E108}
\sigma^2_{s_{m+1}}<\lambda_2\leq\sigma^2_{s_m}.
\end{align}
As evident from \eqref{E3}, the rate is allocated to the $m$ blocks
out of $K$ blocks. From \eqref{NewCons}, \eqref{E3} and \eqref{E23},
we have
$\frac{\sigma^2_{s_1}...\sigma^2_{s_m}}{\lambda_2^m}=\left(\frac{\alpha\lambda_2bK}{\lambda}\right)^{bK}$
and therefore,
\begin{align}\label{E21}
\lambda_2=\sqrt[m+bK]{\left(\sigma^2_{s_1}...\sigma^2_{s_m}\right)}\sqrt[m+bK]{\left(\frac{\lambda}{\alpha
bK}\right)^{bK}}.
\end{align}
Noting \eqref{E108}, we obtain\vspace{-2pt}
\begin{align}\label{E207}
\sigma^2_{s_{m+1}}<\lambda_2=\sqrt[m+bK]{\sigma^2_{s_1}...\sigma^2_{s_m}}\sqrt[m+bK]{\left(\frac{\lambda}{\alpha
bK}\right)^{bK}}\leq\sigma^2_{s_m},
\end{align}
which indicates
$d_{1,m}\!=\!\frac{\!\lambda\sqrt[bK]{\sigma^2_{s_1}...\sigma^2_{s_m}}\!}{bK\!\sqrt[bK]{{\sigma_{s_m}}^{2(m+bK)}}}\!\leq\!\alpha<
d_{2,m}\!=\!\frac{\!\lambda\sqrt[bK]{\sigma^2_1...\sigma^2_{{s_m}}}\!}{bK\!\sqrt[bK]{{\sigma_{s_{m+1}}}^{2(m+bK)}}}.$
Obviously $d_{2,K}=\infty$. Given the value of
$\lambda_2$ computed in \eqref{E21}, \eqref{E16} may be rewritten as
follows
\begin{align}\label{E18}
\alpha<c_m=2^{\frac{m+bK}{m}\frac{B_{\rm{max}}}{b}}\frac{\lambda}{bK\sqrt[m]{\sigma^2_{s_1}...\sigma^2_{s_m}}}.
\end{align}

Now assume the constraint \eqref{E30} is active, i.e.,\vspace{-4pt}
\begin{align}\label{E20}
bC(\alpha,\gamma)= B_{\rm{max}},
\end{align}
$\lambda_0\neq 0$ and $\alpha\geq c_m$.
 Equivalently
 $\frac{\alpha\lambda_3bK}{\lambda}=2^{\frac{B_{\rm{max}}}{b}}.$
Therefore, utilizing \eqref{E3}, \eqref{E23} and \eqref{E20},
$\gamma$ and $R_{s_j}$ are given by\vspace{-4pt}
\begin{align}\label{E34}
\gamma=\left[\frac{2^{\frac{B_{\rm{max}}}{b}}-1}{\alpha}\right]^+
\;\;\;\;\text{if}\;\;\;\alpha\geq c_m
\end{align}
\vspace{-4pt}and
\begin{align}\label{E110}
R_{s_j}=\left[\log_2{\frac{\sigma^2_{s_j}}{{\lambda}_2}}\right]^+\;\;\;\forall
j\in\{1,...,K\}.
\end{align}
Consider $\sigma^2_{s_{n+1}}<{\lambda}_2\leq\sigma^2_{s_n}$.
As observed from \eqref{E110}, the rate is allocated to the $n$
blocks out of $K$ blocks. Therefore, from \eqref{NewCons},
\eqref{E34} and \eqref{E110}, we obtain
$\frac{\sigma^2_{s_1}...\sigma^2_{s_n}}{{\lambda}_2^n}=2^{KB_{\rm{max}}}.$
%
Hence, ${\lambda}_2$ is obtained as follows\vspace{-4pt}
\begin{align}\label{E109}
{\lambda}_2=\sqrt[n]{\frac{\sigma_{s_1}^2...\sigma^2_{s_n}}{2^{KB_{\rm{max}}}}}
\;\;\;\;\text{if}\;\;\;\alpha\geq c_m.
\end{align}
To distinguish between $\lambda_2$ in the two cases described, i.e.,
in \eqref{E21} and \eqref{E109}, we replace it with the new
multiplier $\mathbf{\tilde{\lambda}}_2$ in \eqref{E110} to
\eqref{E109}, when describing the Proposition \ref{P1}.

The parameter $\lambda$ is set to satisfy the power constraint
$\text{E}[\gamma]=\bar{P}$. Thus, \eqref{E48} is derived. Noting
exponentially distributed channel gain, we can obtain \eqref{E49}
and therefore the proof is complete.

\section{Proof of Proposition \ref{P15}}\label{A3}
We first consider COPACR with the buffer constrained scenario. Since
the buffer constrained mean distortion exponent,
$\Delta^{B_{\rm{max}}}_{\textit{MD}}$, of SCORPA is zero, we expect
that that $\Delta^{B_{\rm{max}}}_{\textit{MD}}$ of the other
proposed schemes are also equal to zero. However, we need to compute
the mean for large power constraint. Therefore, in the following we
write the mathematically steps to obtain
$\Delta^{B_{\rm{max}}}_{\textit{MD}}$. Due
$\Delta^{B_{\rm{max}}}_{\textit{MD}}$. Due to the fact that $R^*$ is
\eqref{E91} for large power $\bar{P}\rightarrow \infty$, we have
$\left(\frac{2^{R^*}-1}{q^*_1(R^*)}\right) \rightarrow 0$. Now using
\eqref{E98} in \eqref{E91}, we obtain\vspace{-6pt}
\begin{align}\label{E115}
\frac{2^{R^*}-1}{q^*_1(R^*)}=\exp\left(\frac{-\bar{P}}{2^{R^*}-1}\right).
\end{align}
From \eqref{E98} and \eqref{E115}, mean distortion in \eqref{E81} is
given by\vspace{-2pt}
\begin{align}\label{E116}
\text{E}[D] =\text{E}_s[\sigma_s^2]
\exp\left(\frac{-\bar{P}}{2^{R^*}-1}\right)+\text{E}_{\Sigma}\left[\frac{n\lambda(R^*)+\sum_{j=n+1}^{K}\sigma^2_{s_j}}{K}\right].
\end{align}
As evident, for large power the second term in \eqref{E116} is
dominant and therefore, in order to minimize $\text{E}[D]$ in
\eqref{E116}, it is necessary to set $R^*=\frac{B_{\rm{max}}}{b}$,
we have\vspace{-4pt}
\begin{equation}\label{E4}
\text{E}[D]=\text{E}_{\Sigma}\left[\frac{n\lambda(\frac{B_{\rm{max}}}{b})+\sum_{j=n+1}^{K}\sigma^2_{s_j}}{K}\right]=
\frac{1}{K}\text{E}_{\Sigma}\left[n\sqrt[n]{\frac{\sigma_{s_1}^2...\sigma^2_{s_n}}{2^{KB_{\rm{max}}}}}+\sum_{j=n+1}^{K}{\sigma^2_{s_j}}\right],
\end{equation}
where $n$ is an integer in $\{1,2,...,K\}$, such that
$\sigma^2_{s_{n+1}}<\sqrt[n]{\frac{\sigma_{s_1}^2...\sigma^2_{s_n}}{2^{KB_{\rm{max}}}}}\leq\sigma^2_{s_{n}}$
and hence
$\Delta^{B_{\rm{max}}}_{\textit{MD}}=\underset{\bar{P}\rightarrow\infty}{\lim}{-\frac{\ln{\text{E}[D]}}{\ln{\bar{P}}}}=0.$

Next, we consider the buffer unconstrained scenario. For large
$\bar{P}$ and $B_{\rm{max}}$, it is expected that $R$ is large
enough for limited source variances and from Proposition \ref{P7},
$\lambda(R)$ is very small. Thus, $n$ is set to $K$ and we have
\begin{align}\label{E204}
\lambda(R)=\sqrt[K]{\sigma^2_{s_1}\times...\times\sigma^2_{s_K}}2^{-bR}.
\end{align}
Replacing \eqref{E204} into \eqref{E81}, the mean distortion is
given by\vspace{-2pt}
\begin{align}\label{E205}
\text{E}[D]=\text{E}_s[\sigma_s^2]\left(1-\exp(-\frac{2^{R^*}-1}{q^*_1(R^*)})\right)+\exp\left(-\frac{2^{R^*}-1}{q^*_1(R^*)}\right)\text{E}_{\Sigma}\left[\sqrt[K]{\sigma^2_{s_1}\times...\times\sigma^2_{s_K}}\right]2^{-bR^*}.
\end{align}
In order for $\text{E}[D]$ to tend to zero, it is necessary that
$\frac{2^{R^*}-1}{q^*_1(R^*)}\rightarrow0$. Note that $2^{-bR^*}$
tends to zero for large $R^*$ and power. Now similar to the buffer
constrained scenario, \eqref{E115} is obtained. Thus, noting
\eqref{E98} and \eqref{E115}, the mean distortion in \eqref{E205} is
given by\vspace{-2pt}
\begin{align}\label{E67}
\text{E}[D] =\text{E}_s[\sigma_s^2]
\exp\left(\frac{-\bar{P}}{2^{R^*}-1}\right)+\text{E}_{\Sigma}\left[\sqrt[K]{\sigma^2_{s_1}\times...\times\sigma^2_{s_K}}\right]2^{-bR^*}.
\end{align}
Hence, $R^*$ is to be chosen such that $\text{E}[D]$ given in
\eqref{E67} is minimized. As stated, $R^*$ is a function of the
bandwidth expansion ratio $b$, the power constraint $\bar{P}$ and
the source variances in different blocks of a given frame. Fig.
\ref{F15} demonstrates $R^*$ in bits per channel use as a function
of $\bar{P}$ for different bandwidth expansion ratios $b$ and
exponentially distributed channel gain. It is evident that $R^*$
changes linearly with $\log_2\bar{P}$ for large $\bar{P}$, given $b$
and the source variances. Thus, $R^*$ may be described
by\vspace{-8pt}
\begin{equation}\label{E89}
R^*={r}_1\log_2\bar{P}+{r}_0,
\end{equation}
\begin{figure}[h!]
\centering
\includegraphics[width=3.1in]{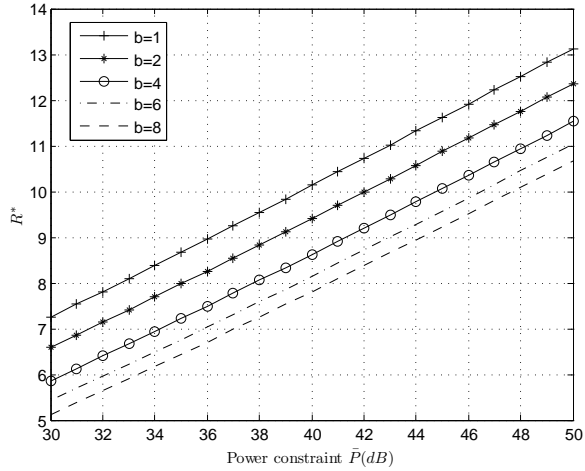} \caption{\small{$R^*$ versus $\bar{P}$(dB) in COPACR scheme for different bandwidth expansion ratios $b$ over a block Rayleigh fading
channel and source U.}} \label{F15}
\end{figure}
where ${r}_1$ and ${r}_0$ are obtained by least square fitting as
shown in Table \ref{T1}(a). Note that the results in Fig. 1 are due
to the quasi-stationary Gaussian source U described in Section
\ref{SVI} and additive Gaussian channel noise $\mathcal{N}(0,1)$.
Nonetheless, based on our experiments changing the source or the
Gaussian noise statistics merely reflects in the fitting parameters
$r_0$ and $r_1$ and do not affect the linear shape of the curves.
The parameter $r_1$ is referred to as the buffer unconstrained
multiplexing gain. From \eqref{E67} and \eqref{E89}, we have
\begin{align}\label{E65}
\text{E}[D] =\text{E}_s[\sigma_s^2]
\exp\left(\frac{-\bar{P}}{2^{r_0}\bar{P}^{r_1}-1}\right)+
\text{E}_{\Sigma}\left[\sqrt[K]{\sigma^2_{s_1}\times...\times\sigma^2_{s_K}}\right]2^{-br_0}\bar{P}^{-br_1}.
\end{align}
As evident, to have $\text{E}[D]\rightarrow 0$ for large power, it
is necessary that the power exponent in the first and second terms
to be respectively less than 1 and more than zero. Thus,
$r_1\in[0,1)$, and therefore we can ignore the first term with
respect to the second to obtain
\begin{align}\label{E66}
\text{E}[D]
=\text{E}_{\Sigma}\left[\sqrt[K]{\sigma^2_1\times...\times\sigma^2_K}\right]\bar{P}^{-br_1}2^{-br_0}.
\end{align}
 Thus, $\Delta_{\textit{MD}}=\underset{\bar{P}\rightarrow\infty}{\lim}{-\frac{\ln{\text{E}[D]}}{\ln{\bar{P}}}}=br_1.$

\begin{table}
\caption{\small{Parameters (a) $r_0$ and $r_1$ in COPACR scheme (b)
$\tilde{r}_1$ and $\tilde{r}_0$ in CRCP scheme, for source U
described in Section \ref{SVI} with the buffer unconstrained
scenario and $K=2$ as a function of bandwidth expansion ratio $b$}.}
\centering \label{T1}
\subfloat[]{%
\begin{tabular}{|c||c|c|}
\hline $b$&$r_1$&$r_0$ \\ \hline
    $1$ &$0.90$ & $-1.83 $  \\\hline
    $2$ &$ 0.89$ & $-2.39 $  \\\hline
    $4$ &$ 0.88 $& $-3.03 $  \\\hline
    $6$ &$ 0.87$ & $-3.41 $  \\\hline
    $8$ &$ 0.86$ & $-3.68 $  \\\hline
\end{tabular}}
\qquad\qquad
\subfloat[]{%
\begin{tabular}{|c||c|c|}
\hline $b$&$\tilde{r}_1$&$\tilde{r}_0$\\
\hline 1 &  0.50 &  -0.19   \\\hline
    2    &  0.33    & 0.20\\\hline
    4    &  0.20    &0.32\\\hline
    6    & 0.14     &0.31  \\\hline
    8 &   0.11    &0.29\\\hline
\end{tabular}}
\end{table}
%

\section{Proof of Proposition \ref{P2}}\label{A4}
We solve the problem in two different cases 1) $B_{\rm{max}}\leq
bC(\alpha)$ and 2) $B_{\rm{max}}> bC(\alpha)$, where in case 1 and
2, respectively the first and the second constraints in \eqref{E120}
have to be satisfied. Cases 1 and 2 respectively are equivalent to
$\alpha\geq\frac{2^{B_{\rm{max}}}-1}{\bar{P}}$ and
$\alpha<\frac{2^{B_{\rm{max}}}-1}{\bar{P}}$.

Therefore, using Lagrange optimization approach, we have
\begin{align}\label{E41}
J={\text{E}_{\Sigma,\alpha}\left[\frac{\sum_{j=1}^{K}\sigma^2_s2^{-2R_{s_j}}}{K}\right]}+\lambda_1\sum_{j=1}^{K}{R_{s_j}}.
\end{align}
Differentiating $J$ with respect to $R_{s_j}$, setting it to zero
and noting the fact that $R_{s_j}$ is to be nonnegative, we obtain
\begin{align}\label{E121}
R_{s_j}=\left[\log_2\frac{\sigma_{s_j}^2}{\lambda}\right]^+,
\end{align}
where $\lambda=\frac{\lambda_1}{2\ln2}$. 
Satisfying the first constraint in \eqref{E120} in the case
$B_{\rm{max}}\leq C(\alpha)$ imposes
$\lambda=\sqrt[n]{\frac{\sigma_{s_1}^2...\sigma^2_{s_n}}{2^{KB_{\rm{max}}}}}$,
where $n$ is an integer in $\{1,2,...,K\}$ such that
$\sigma^2_{s_{n+1}}<\lambda\leq\sigma^2_{s_{n}}.$

Noting \eqref{E120} and \eqref{E121} in the case
$B_{\rm{max}}>C(\alpha)$, we obtain
$\lambda=\sqrt[m]{\frac{\sigma_{s_1}^2...\sigma^2_{s_m}}{(1+\alpha\bar{P})^{bK}}},$
where $m$ is an integer in $\{1,2,...,K\}$ such that
$\sigma^2_{s_{m+1}}<\lambda\leq\sigma^2_{s_{m}}.$
Thus, we have
$d_{1,m}=\frac{\frac{\sqrt[bK]{\sigma^2_{s_1}...\sigma^2_{s_m}}}{(\sigma^2_{s_m})^{\frac{m}{bK}}}-1}{\bar{P}}\leq\alpha<d_{2,m}=\frac{\frac{\sqrt[bK]{\sigma^2_{s_1}...\sigma^2_{s_m}}}{(\sigma^2_{s_{m+1}})^{\frac{m}{bK}}}-1}{\bar{P}},$
where $d_{2,K}=\infty$.

 To distinguish between $\lambda$ in the two
cases described, we replace it with the new multiplier
$\mathbf{\tilde{\lambda}}$ in case 1, when describing the
Proposition \ref{P2}.


\section{Proof of Proposition \ref{P13}}\label{A5}
 Using \text{E}[D] in \eqref{E123} for Rayleigh block fading
channel with the given optimized $R^*$, achieving \eqref{E74} is
straightforward. For $\bar{P}\rightarrow\infty$ with the buffer
constrained scenario, the solution to \eqref{E71}, $R^*$, have to be
set to $\frac{B_{\rm{max}}}{b}$. Therefore, noting \eqref{E71} and
\eqref{E74} we obtain
\begin{align}\label{E75}
\text{E}[D]&=\frac{1}{K}\text{E}_{\Sigma}\left[n\sqrt[n]{\frac{\sigma_{s_1}^2...\sigma^2_{s_n}}{2^{KB_{\rm{max}}}}}+\sum_{j=n+1}^{K}{\sigma^2_{s_j}}\right],
\end{align}
where $n$ is an integer in $\{1,2,...,K\}$ such that
$\sigma^2_{s_{n+1}}<\sqrt[n]{\frac{\sigma_{s_1}^2...\sigma^2_{s_n}}{2^{KB_{\rm{max}}}}}\leq\sigma^2_{s_{n}}.$
Therefore, $\Delta_{\textit{MD}}^{B_{\rm{max}}}=0.$

For $\bar{P}\rightarrow\infty$ and $B_{\rm{max}}\rightarrow\infty$
with bounded source variances, it is expected that $R^*$ is large.
Thus using reverse water-filling, the rate is allocated to all
blocks in a given frame, i.e., $n=K$, and we obtain
$\lambda(R^*)=\sqrt[K]{\sigma^2_{s_1}\times...\times\sigma^2_{s_K}}2^{-bR^*}$.
Replacing $\lambda(R^*)$ into \eqref{E74}, the mean distortion is
rewritten as follows
$\text{E}[D]=\text{E}_s[\sigma_s^2]\left(1-\exp(\frac{2^{R^*}-1}{\bar{P}})\right)+\exp(\frac{2^{R^*}-1}{\bar{P}})\text{E}_{\Sigma}\left[\sqrt[K]{\sigma^2_{s_1}\times...\times\sigma^2_{s_K}}\right]2^{-bR^*}.$
Note the fact that for large $R^*$, we have $2^{-2bR^*}=0$.
Therefore, in order for $\text{E}[D]$ to tend to zero for large
power, it is necessary to have
$\frac{2^{R^*}-1}{\bar{P}}\rightarrow0$. Thus, using \eqref{E98}, we
have
\begin{align}\label{E92}
\text{E}[D]&=\text{E}_s[\sigma_s^2]\frac{2^{R^*}-1}{\bar{P}}+\text{E}_{\Sigma}\left[(\sigma^2_{s_1}\times...\times\sigma^2_{s_K})^{\frac{1}{K}}\right]2^{-bR^*}.
\end{align}
We seek $R^*$ such that $\text{E[D]}$ is minimized. Fig. \ref{F16}
demonstrates $R^*$ in bits per channel use as a function of
$\bar{P}$ for different bandwidth expansion ratios $b$ and
exponentially distributed channel gain. The results are for source
$U$ described in Section \ref{SVI}, however, our experiments with
other sources reveal curves of similar behavior. It is evident that
$R^*$ changes linearly with $\log_2\bar{P}$ for large $\bar{P}$,
given $b$ and source variances. Thus, $R^*$ may be described
by\vspace{-4pt}
\begin{equation}\label{E77}
R^*=\tilde{r}_1\log_2\bar{P}+\tilde{r}_0.
\end{equation}
\begin{figure}[t!]
\centering
\includegraphics[width=3.1in]{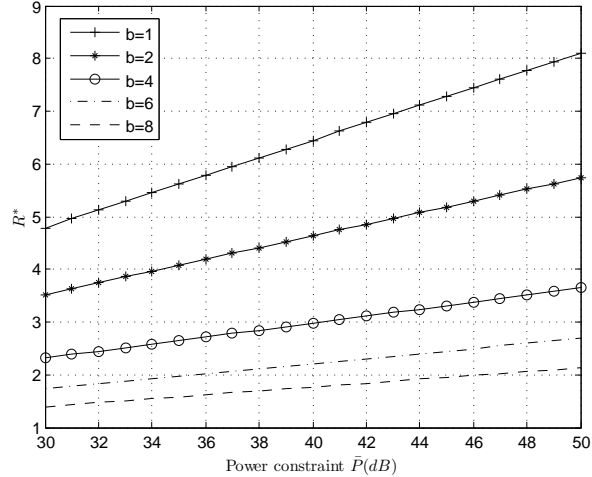} \caption{\small{$R^*$ versus $\bar{P}$(dB) in CRCP scheme for different bandwidth expansion ratios $b$ over a block Rayleigh fading
channel and source U.}} \label{F16}
\end{figure}
The parameter $\tilde{r}_1$ denotes the CRCP buffer unconstrained
multiplexing gain. From \eqref{E75} and \eqref{E77}, we obtain
$\text{E}[D] \!\!=\!\!\text{E}_s[\sigma_s^2]
{\bar{P}^{\tilde{r}_1-1}2^{\tilde{r}_0}}
\!\!+\!\!\text{E}_{\Sigma}\left[\sqrt[K]{\sigma^2_{s_1}\times...\times\sigma^2_{s_K}}\right]\bar{P}^{-b\tilde{r}_1}2^{-b\tilde{r}_0}.$
When $\bar{P}\rightarrow\infty$, for $\text{E}[D]\rightarrow 0$ we
need to have $\tilde{r}_1\in[0 \;1)$. Changing $\tilde{r}_1$ from 0
to 1, the power exponents ${\tilde{r}_1-1}$ and ${-b\tilde{r}_1}$
vary from $-1$ to $0$ and $0$ to $-b$, respectively. For large power
constraint and a given value of $\tilde{r}_1$, one of the two terms
$\bar{P}^{\tilde{r}_1-1}$ or $\bar{P}^{-b\tilde{r}_1}$ with the
larger exponent dominates. As the two exponents vary in opposite
directions when $\tilde{r}_1$ changes, the minimum value of the
dominating exponent occurs where the two exponents are equal. As a
result for maximum mean distortion exponent, we should
have\vspace{-10pt}
\begin{align} \label{E105}
\tilde{r}_1=\frac{1}{(b+1)}.
\end{align}
Thus, we obtain $\text{E}[D]
\!\!=\!\!\bar{P}^{-\frac{b}{b+1}}\biggl(2^{-b\tilde{r}_0}\text{E}_{\Sigma}\left[\!\!\sqrt[K]{\sigma^2_{s_1}\times...\times\sigma^2_{s_K}}\right]+2^{\tilde{r}_0}\text{E}_s[\sigma_s^2]\biggr)$
and
$\Delta_{\textit{MD}}\!\!=\!\!\underset{\bar{P}\rightarrow\infty}{\lim}\!{-\frac{\ln{\text{E}[D]}}{\ln{\bar{P}}}}\!\!=\!\!\frac{b}{b+1}.$
Table \ref{T1}(b) demonstrate $\tilde{r}_0$ and $\tilde{r}_1$
obtained by least squared fitting of the results in Fig. \ref{F16}
to the model in equation \eqref{E77}. As evident, the results
coincide with the analytical solution presented in \eqref{E105}.

\section{Proof of Proposition \ref{P9}}\label{A1}
The average power to asymptotically achieve a certain mean
distortion using SCORPA and COPACR schemes are denoted by
$\bar{P}_1$ and $\bar{P}_2$, respectively. Thus, we can use
\eqref{E84} to derive $G\!_{M\!D}$.

Noting \eqref{E55} and \eqref{E66}, we set
\begin{align}\label{E85}\vspace{-2pt}
\frac{1}{\bar{P}_1^b}\left(\text{E}_{\Sigma}\left[
\sqrt[K+bK]{\sigma^2_1...\sigma^2_K}\right]\right)^{b+1}\Gamma\left(\frac{1}{b+1},0\right)^{b+1}=\text{E}_{\Sigma}\left[\sqrt[K]{\sigma^2_1\times...\times\sigma^2_K}\right]\bar{P}_2^{-br_1}2^{-br_0}.
\end{align}
Therefore, we can derive \eqref{E68}. Achieving \eqref{E69} or
\eqref{E70} is straightforward, when we use \eqref{E66}, \eqref{E62}
and \eqref{E78} or \eqref{E62}, \eqref{E78} and \eqref{E65}; and
obtain the following\vspace{-6pt}
\begin{align}\label{E86}
\text{E}_{\Sigma}\left[\sqrt[K]{\sigma^2_1\times...\times\sigma^2_K}\right]\bar{P}_1^{-br_1}2^{-br_0}=\bar{P}_2^{-1}V_m,\;\;b>1
\end{align}
\vspace{-2pt}and\vspace{-6pt}
\begin{align}\label{E87}
\text{E}_{\Sigma}\left[\sqrt[K]{\sigma^2_1\times...\times\sigma^2_K}\right]\bar{P}_1^{-br_1}2^{-br_0}=\bar{P}_2^{-1}W_m,\;\;b=1\end{align}
\vspace{-6pt}or\vspace{-6pt}
\begin{align}\label{E90}
\bar{P}_1^{-\frac{b}{b+1}}\biggl(2^{-b\tilde{r}_0}\text{E}_{\Sigma}\left[\sqrt[K]{\sigma^2_1\times...\times\sigma^2_K}\right]+2^{\tilde{r}_0}\text{E}[\sigma^2]\biggr)=\bar{P}_2^{-1}V_m,\;\;b>1,
\end{align}
\vspace{-6pt}and\vspace{-6pt}
\begin{align}\label{E88}
\bar{P}_1^{-\frac{1}{2}}\biggl(2^{-\tilde{r}_0}\text{E}_{\Sigma}\left[\sqrt[K]{\sigma^2_1\times...\times\sigma^2_K}\right]+2^{\tilde{r}_0}\text{E}[\sigma^2]\biggr)=\bar{P}_2^{-1}W_m,\;\;b=1.
\end{align}

\ifCLASSOPTIONcaptionsoff
  \newpage
\fi

\bibliographystyle{Ieeetr}
\bibliography{IEEEabrv,Ref}

\end{document}